\pgfplotsset{compat=newest}
\pgfplotsset{plot coordinates/math parser=false}
\newlength\figureheight
\newlength\figurewidth
\theoremstyle{definition}
\newtheorem {theorem} {Theorem}
\newtheorem{lemma}[theorem] {Lemma}
\newcommand{\mycell}[2][c]{%
  \begin{tabular}[#1]{@{}c@{}}#2\end{tabular}}
\begin{document}

%
%

\title{A Greedy Algorithm for Optimally Pipelining a Reduction}

\author{\IEEEauthorblockN{Bradley R. Lowery\IEEEauthorrefmark{1} and Julien Langou\IEEEauthorrefmark{1}}
\IEEEauthorblockA{
Mathematical and Statistical Sciences\\
University of Colorado Denver\\
Denver, CO, USA\\
\{Bradley.Lowery, Julien.Langou\}@ucdenver.edu  
}
\IEEEauthorblockA{\IEEEauthorrefmark{1}Research of this author was fully supported by the National Science Foundation \\
grant \# NSF CCF 1054864.}
}

\maketitle

%
%

\begin{abstract}
Collective communications are ubiquitous in parallel applications.  We present
two new algorithms for performing a reduction. The operation associated with our reduction needs
to be associative and commutative.  The two algorithms are developed
under two different communication models (unidirectional and bidirectional).
Both algorithms use a
greedy scheduling scheme.  For a unidirectional,
fully connected  network, we prove that our greedy algorithm is optimal when
some realistic assumptions are respected.  Previous algorithms fit the same
assumptions and are only appropriate for some given configurations. Our
algorithm is optimal for all configurations. We note that there are some
configuration where our greedy algorithm significantly outperform any existing
algorithms.
This result represents a
contribution to the state-of-the art.
For a bidirectional, fully connected network, we present a different greedy algorithm.
We verify by experimental simulations that our algorithm matches the time
complexity of an optimal broadcast (with addition of the computation). 
Beside reversing an optimal broadcast algorithm,
the
greedy algorithm is the first known reduction algorithm to experimentally attain this time
complexity.  Simulations show
that this greedy algorithm performs well in practice, outperforming any
state-of-the-art reduction algorithms. Positive experiments on 
a parallel distributed machine are also presented.
\end{abstract}

\begin{IEEEkeywords}
reduction; reduce; collective; pipelining; message passing; communication;
\end{IEEEkeywords}

%
%

\section{Introduction}\label{intro}
Applications relying on parallel distributed communication require the use of
collective communications. In this paper, we focus on the reduction operation (MPI\_Reduce) 
in which each process holds a piece of data and these pieces of
data need to be combined using an associative operation to form the results on
the root process. We present
two new algorithms for performing a reduction. The operation associated with our reduction needs
to be associative and commutative.  The two algorithms are developed
under two different communication models (unidirectional and bidirectional).
Both algorithms use a
greedy scheduling scheme.

Collective communications like reduction can often be the bottleneck of
massively parallel application codes, therefore optimizing collective
communications can greatly improve the performance of such application codes.
The performance of a reduction algorithm is highly dependent on the machine
parameters (e.g., network topology) and the underlying architecture; this makes
the systematic optimization of a collective communication for a given machine a
real challenge. For a unidirectional,
fully connected  network, we prove that our greedy algorithm is optimal when
some realistic assumptions are respected.  Previous algorithms fit the same
assumptions and are only appropriate for some given configurations. Our
algorithm is optimal for all configurations. We note that there are some
configuration where our greedy algorithm significantly outperform any existing
algorithms. This result represents a
contribution to the state-of-the art.
For a bidirectional, fully connected network, we present a different greedy algorithm.
We verify by experimental simulations that our algorithm matches the time
complexity of an optimal broadcast (with addition of the computation). 
Beside reversing an optimal broadcast algorithm, the
greedy algorithm is the first known reduction algorithm to experimentally attain this time
complexity.  Simulations show
that this greedy algorithm performs well in practice, outperforming any
state-of-the-art reduction algorithms.
With respect to practical application, much work is still
needed in term of auto-tuning and configuring for various architectures (nodes
of multi-core, multi-port networks, etc.).

In the unidirectional context, we compared the greedy algorithm (uni-greedy) with three
standard algorithms (binomial, pipeline, and binary) using a linear model to
represent the point-to-point cost of communicating between processors.  Unlike
the standard algorithms, no closed-form expression exists for the completion
time of the greedy algorithm; therefore, simulations are used to compare the
algorithms. While we know that our algorithm is optimal in this context, these
simulations indicate the performance gain of the new algorithm over classic
ones.  In particular, for mid-size messages, the new algorithm is about 50\%
faster than pipeline, binary tree, or binomial tree.

In the bidirectional context, we again compare the greedy algorithm (bi-greedy) with the 
three standard algorithms as well as a reduce-scatter/gather (butterfly) algorithm.  For
the standard algorithms we see similar results as the unidirectional case.
The butterfly algorithm performs well for mid-size messages, but
has poor asymptotic behavior. 

In the experimental context, we have implemented the binomial, pipeline, uni-greedy,
and bi-greedy algorithms using OpenMPI version 1.4.3;  all algorithms are implemented
using point-to-point MPI functions: MPI\_Send and MPI\_Recv, or MPI\_Sendrecv.  Moreover, the
OpenMPI library provides a state-of-the-art implementation for a reduction.
We also, compare with an implementation of the butterfly algorithm.
Numerical comparisons of the greedy algorithm to OpenMPI's built
in function MPI\_Reduce as well as the binomial and pipeline algorithms show
the greedy algorithm is the best for medium size messages, confirming that the results
found theoretically applies in our experimental context.  However, the more simplistic algorithms
(binomial and pipeline) perform better for small and large messages, respectively.  
Finally, the implementation of the butterfly algorithm exhibits 
similar results as in the theoretical context.  

Finally, the idea of unequal segmentation is considered. Typically, when a message
is split into segments during a reduction operation, the segments are assumed
to be of equal size. We investigate if the greedy and pipeline algorithms can
be improved by allowing for segments to have unequal sizes. It turns out that,
for some parameter values, the greedy algorithm (in a unidirectional system) was optimized by unequal
segmentations.  This indicates that removing the equal segmentation assumption
from our theory can lead to better algorithms. However, the gains obtained were
marginal.  We also note that the pipeline algorithm was always optimized by
equal segmentation.

\section{Model}\label{model}
Unidirectional and bidirectional systems are two ways to describe how
processors are allowed to communicate with each other.  In a unidirectional
system, a processor can only send a message or receive a message at a given
time, but not both.  A bidirectional system assumes that at a given time a
processor may receive a message from a processor and send a message to another
(potentially different) processor simultaneously.  We assume an unidirectional
system for the initial sections and optimality proof (Sections~\ref{standard_algs} and~\ref{algor}).  In
Section~\ref{bidir} we adapt the algorithm to a bidirectional system.

The parallel system contains $p$ processors indexed from 0 to $p-1$. 
Additionally, we will assume the system is fully connected (every processor has a
direct connection to all other processors) and homogeneous. These
assumptions are not verified in practice on current architecture, however they
represent a standard theoretical framework and our experiments indicate that
they are valid enough in practice to develop useful algorithms. 

A linear model (Hockney~\cite{Hockney:94}) is used to represent the cost of a
point-to-point communication.  The time required for each communication is
given by $\alpha + \beta m$, where $\alpha$ is the latency (start up), $\beta$
is the inverse bandwidth (time to send one element of the message), and $m$ is the size
of the message. Since we are performing a reduction, computations are also
involved so we will also assume the time for computation follows a linear model
and is given by $\gamma m$, where $\gamma$ is the computation time for one
element of the message.  We will investigate the theoretical case when $\gamma
= 0$ (i.e., very fast processing units) as well as for nonzero $\gamma$.
Furthermore, we will assume that communication and computation can not overlap.

To achieve better performance the messages can be split into $q$ segments of size
$s_1,\dots,s_q$. In Section~\ref{algor} we show the greedy algorithm is optimal for any
segmentation. To obtain the best
performance of a given algorithm an optimal segmentation is determined. In Section~\ref{theor}
we restrict the optimization to equi-segmentation ($s_i = s, \; \forall \; i$), except the last
segment may possibly be smaller. 

\section{Collective Communications and Related Work}\label{cc}
A collective communication operation is an operation on data distributed between a set of processors (a collective).
Examples of collective communications are broadcast, reduce (all-to-one), reduce (all-to-all), scatter, and gather. \\
\textbf{Broadcast:} Data on one processor (the root) is distributed to the other processors in the collective.\\  
\textbf{Reduce:} Each processor in a collective has data that is combined entry-wise and the result is
stored on the root processor. \\
\textbf{All Reduce:} Same as reduce except that all processors in the collective contain a copy of
the result. \\
\textbf{Scatter:} Data that is on the root processor is split into pieces and the pieces are distributed between the
processors in the collective. The result is each processor (including the root processor) contains a piece of the
original data. \\
\textbf{Gather:} The reverse operation of a scatter. \\

Optimizing the reduction operation is closely related to optimization
of the broadcast operation.  Any broadcast algorithm can be reversed to 
perform a reduction.  
Bar-Noy et al.~\cite{BarNoy:2000} and Tr{\"a}ff and Ripke~\cite{TraffR:08} both provide 
algorithms that produce an optimal broadcast schedule for a bidirectional system. 
Here the messages are split into segments and the segments are broadcast in rounds. 
In both cases the optimality is in the sense 
that the algorithm meets the lower bound on the number of communication rounds.  
For the theoretical case when $\gamma = 0$, reversing an
optimal broadcast will provide an optimal reduction.  However, for $\gamma \ne 0$
this is no longer valid as an optimal schedule will most like take into account 
the computation.  

Rabenseifner~\cite{Rabenseifner:04} provides a reduce-scatter/gather algorithm (butterfly) which 
provides optimal load balancing to minimize the computation. 
Rabenseifner does not predefine a segmentation of the message, but
rather uses the techniques recursive-halving and recursive-doubling.  The algorithm 
is done in two phases.  In the first phase the message is repeatedly halved in size and
exchanged among processes.  At the end of the first phase the final result
is distributed among all the processors.  This phase is know as a reduce-scatter.  
The second phase gathers the results recursively doubling the size of the message.
In \cite{RabenseifnerTraff:04} Rabenseifner and Tr{\"a}ff
improve on the algorithm for non-power of two number of processors.

Sanders et al.~\cite{SandersST:pc:2009} introduce an algorithm to schedule to a reduction
(or broadcast) using two binary trees. The authors notice that two binary trees could be use
simultaneously and effectively reduce the bandwidth by a factor of two from a single binary tree.
The time complexity approaches the lower bound for large messages.  However, for small messages the
latency term is twice larger than optimal.  Also, the time complexity is never better than 
that of a reverse-optimal broadcast. 

Other models have been used to describe more complex machine architectures.
Heterogeneous networks, where processor characteristics are composed of 
different communication and computation rates, have been considered.  Beaumont et
al.~\cite{Beaumont:tpds:05,Beaumont:ipdps:05} consider
optimizing a broadcast and Legrand et al.~\cite{Legrand:2005} consider
optimizing scatter and reduce.  In these papers, the problem is formulated 
as a linear program and solved to maximize the throughput (number of 
messages broadcasting per time unit).
Also, higher dimensional systems have also been considered~\cite{ChanGGT:06}. 
Here a processor can communicate with more than 
one processor at a time.

The machine parameters and architecture vary between machines and one algorithm may perform
better on one machine versus another.  Accurately determining machine parameters can be a
difficult task.  In practice, auto-tuning for each machine is required to
obtain a well-performing algorithm. Vadhiyar et al.~\cite{Vadhiyar:2000} discuss how
to experimentally determine the optimal algorithm.  
Pjesivac-Grbovi{\'c} et al.~\cite{Pjesivac:IPDPS:05} compare various models
that can be used to auto-tune a machine. These models are more complex than the Hockney model
and can provide a more accurate model for the communication cost.  However, a linear model provides a
good bases for theoretically comparing algorithms. 
Other models, such as LogP, LogGP, and PLogP, each provided useful information
to help auto-tune a machine.  

\section{Standard Algorithms}\label{standard_algs}
Before introducing the greedy algorithms it is helpful to review
three standard algorithms (binomial, pipeline, and binary).

Since we are using a unidirectional system, we will call a processor either a sending
processor or a receiving processor. Sending and receiving processors are paired
together to perform a reduction. After a receiving processor receives a segment
it will combine its segment with the segment received. After a processor sends
a segment it is said to be reduced for that segment. For the reduction to be
completed each processor (except the root) must be reduced for each segment.
The three algorithms are described below.

\paragraph{Binomial} At a given time suppose there are $n$ processors
left to be reduced for a message, then $\lfloor n/2 \rfloor$ processors are
assigned as sending processors and $\lfloor n/2 \rfloor$ are assigned as
receiving processors. Afterwards there will be $\lceil n/2 \rceil$ processors
left for reduction and the process is repeat until $n=1$. Segmenting the
message would only increase the latency of the communication time since all
processors must be finished before the next segment can be started. Therefore,
we do not consider segmentation.

\paragraph{Pipeline} At a given time suppose that processors $k+1$ to
$p-1$ have been reduced for segment $s_i$. Processor $k$ is assigned as a
sending processor for segment $s_i$ and processor $k-1$ is assigned as a
receiving processor for segment $s_i$. 

\paragraph{Binary} At a given time suppose there are $2^n - 1$
processors left to be reduced for segment $s_i$, then $2^{n-1}$ processors are
assigned as sending processors and $2^{n-2}$ processors are assigned as
receiving processors. If $p \ne 2^n - 1$ for some $n$, then the initial step in
the algorithm reduces the ``extra'' processors so that in the next step there is
$2^n -1$, for some $n$, processors left for reduction. Since there are twice as
many sending processors as receiving processors, two processors will send to
the same receiving processor and the time for one step in the binary algorithm
is $2(\alpha + \beta s + \gamma s)$.

\subsection{Lower bounds and optimality of the segment size $s$.}

\begin{table*}[!t]
\centering
\resizebox{.75\textwidth}{!}{%
	\begin{tabular}{|l | l | c |}
	\hline
			{Binomial} & Time & $\lceil log_2 p \rceil (\alpha + \beta m + \gamma m)$ \\ \hline
\multirow{3}{*}{Pipeline} & Time & $(p-1)(\alpha + \beta s + \gamma s) + 2(q-1)(\alpha + \beta s + \gamma s)$ \\
\cline{2-3}
			& $s_{opt}$ & $\left(\dfrac{2m\alpha}{(p-3)(\beta+\gamma)}\right)^{1/2}$ \\ \cline{2-3}
			& $T_{opt}$ & $\left[ \left( (p-3)\alpha \right)^{1/2} + \left( 2m(\beta + \gamma) \right)^{1/2}\right]^2$ \\ \hline
			
\multirow{3}{*}{Binary} & Time & $2\left(\lceil log_2 (p+1) \rceil -1 \right)(\alpha + \beta s + \gamma s) + 4(q-1)(\alpha + \beta s +
\gamma s)$ \\ \cline{2-3}
			& $s_{opt}$ & $\left(\dfrac{2m\alpha}{(N-3)(\beta+\gamma)}\right)^{1/2}$ \\ \cline{2-3}
			& $T_{opt}$ & $2\left[ \left( (N-3)\alpha\right)^{1/2} + \left( 2m(\beta + \gamma)\right)^{1/2} \right]^2$ \\ \hline
        \end{tabular}}
\caption{Time analysis for standard algorithms in a unidirectional system, where $N = \lceil
    \log_2(p+1)\rceil$, $s_{opt}$ is the optimal equi-segment
size, and $T_{opt}$ is the time for the algorithm at $s_{opt}$. Formulae are valid for $p>3$.}\label{time}
\end{table*}

\begin{table*}[htbp]
\centering
\resizebox{.65\textwidth}{!}{%
\begin{tabular}{| c | c | c | c |}
\hline
  & Latency & Bandwidth & Computation \\
\hline
Reduce & $\lceil log_2 p \rceil \alpha$ & $2m\beta$ & $\frac{p-1}{p}m\gamma$\\
\hline
Binomial & $\lceil \log_2 p \rceil \alpha$ & $\lceil \log_2 p \rceil m \beta$ & $\lceil \log_2 p \rceil m \gamma$ \\
\hline
Pipeline & $(p-1)\alpha$ & $(p + 2m - 3)\beta$ & $(p+ 2m - 3)\gamma$\\
\hline
Binary & $ 2(N - 1) \alpha$ & $2(N + 2m - 3)\beta$ & $2(N + 2m - 3)\gamma$\\
\hline
\end{tabular}}
\caption{Lower bounds for each term in the time for a reduction as well as the lower bounds for each standard
algorithm in a unidirectioanl, where $N = \lceil \log_2(p+1)\rceil$.  Formulae are valid for $p>2$.}\label{lb}
\end{table*}
Given $p$, $\alpha$, $\beta$, $\gamma$, $m$, and $s$, the time for a reduction can be calculated for each algorithm. Using these
formulae the optimal segmentation size ($s_{opt}$) and optimal time ($T_{opt}$) can also be calculated.  When calculating the optimal
segmentation size we assume that the segments are of equal size.  A summary
of these values is given in Table~\ref{time}.  It is unknown what the optimal segmentation would be if the segments 
are of unequal size.  The formulae for time are upper bounds in the case when $m$ is not
divisible by $s$.  These formulae differ from the formulae that appear in ~\cite{TraffR:08,Pjesivac:IPDPS:05} by a factor of 2 in the bandwidth term
since the authors assume a bidirectional system. In a unidirectional system there is extra lag time.

To better understand the algorithms it is helpful to know the lower bounds for each term:$\alpha$, $\beta$, and
$\gamma$. The lower bounds for any reduction algorithm as well as other collective communications are discussed 
in~\cite{Chan:cluster:04}. The following rationales are used to obtain the lower bounds.

\paragraph{Latency} Every processor in a collective has at least one message that must be communicated. At any time step,
at most half of the remaining processors can send their messages to another processor to perform the reduction.
This gives the lower bound $\lceil \log_2 p \rceil \alpha$.

\paragraph{Bandwidth} Consider the case when $p=3$.  Each processor must be involved in a communication for 
each element of the message.  When two processors are involved in a communication the third processor is idle.  Therefore, 
there must be two communications performed for each element of the message, which cannot occur simultaneously.  This 
gives the lower bound $2m\beta$.  For $p > 3$, the lower bound holds since a message of the same size on more processors 
can not be communicated faster.  $p=2$ is a special case with a bandwidth term of $m\beta$.

\paragraph{Communication} Each element of the message must be involved in $p-1$ computations. Distributing the
computations evenly among all the processors gives the lower bound $\frac{p-1}{p}m \gamma$.

The lower bounds for the standard algorithms can be obtained using the following rationale. The latency term will be
minimized when no segmentation is used ($q=1$), since any segmentation will only increase the number of time steps and
hence increase the latency. The bandwidth and computation terms will be minimized when $s=1$, since this will maximize
the time when multiple processors are working simultaneously. Table \ref{lb} shows the lower bounds for any reduction
algorithm and the lower bounds for each of the standard algorithms.

Given machine parameters $\alpha$, $\beta$ and $\gamma$, one can select the algorithm that provides the best
performance. In practice, the latency is much larger than the bandwidth  $(\alpha \gg \beta)$. 
For small messages, $\alpha \gg \beta m$, the time for a reduction is
dominated by the latency. The binomial algorithm minimizes the latency term and will provide the best performance for
small messages. For large messages, $\alpha \ll \beta m$, the time for a reduction is dominated by the bandwidth. The
pipeline algorithm provides the smallest bandwidth term and will therefore give the best performance for large messages. For
medium size messages, the binary algorithm will provide the best performance. The binary algorithm has both
qualities that give binomial and pipeline good lower bounds for latency and bandwidth, respectively, only differing by a factor of two.

The values of $m$ that are considered ``small'' and ``large'' depend on the machine parameters
($\alpha$, $\beta$, $\gamma$, and $p$). 
Figure~\ref{fig:theoretical_best_standard_algorithm} shows which algorithm is best for $\alpha = 10$, $\beta = 1$, $\gamma =
0$ as a function of the number of processors $p$ and the message size $m$.  Here we are assuming that there is no time
required for the computation $(\gamma = 0)$.  

\begin{figure}[htbp]
	\centering

	\resizebox{\columnwidth}{!}{\includegraphics{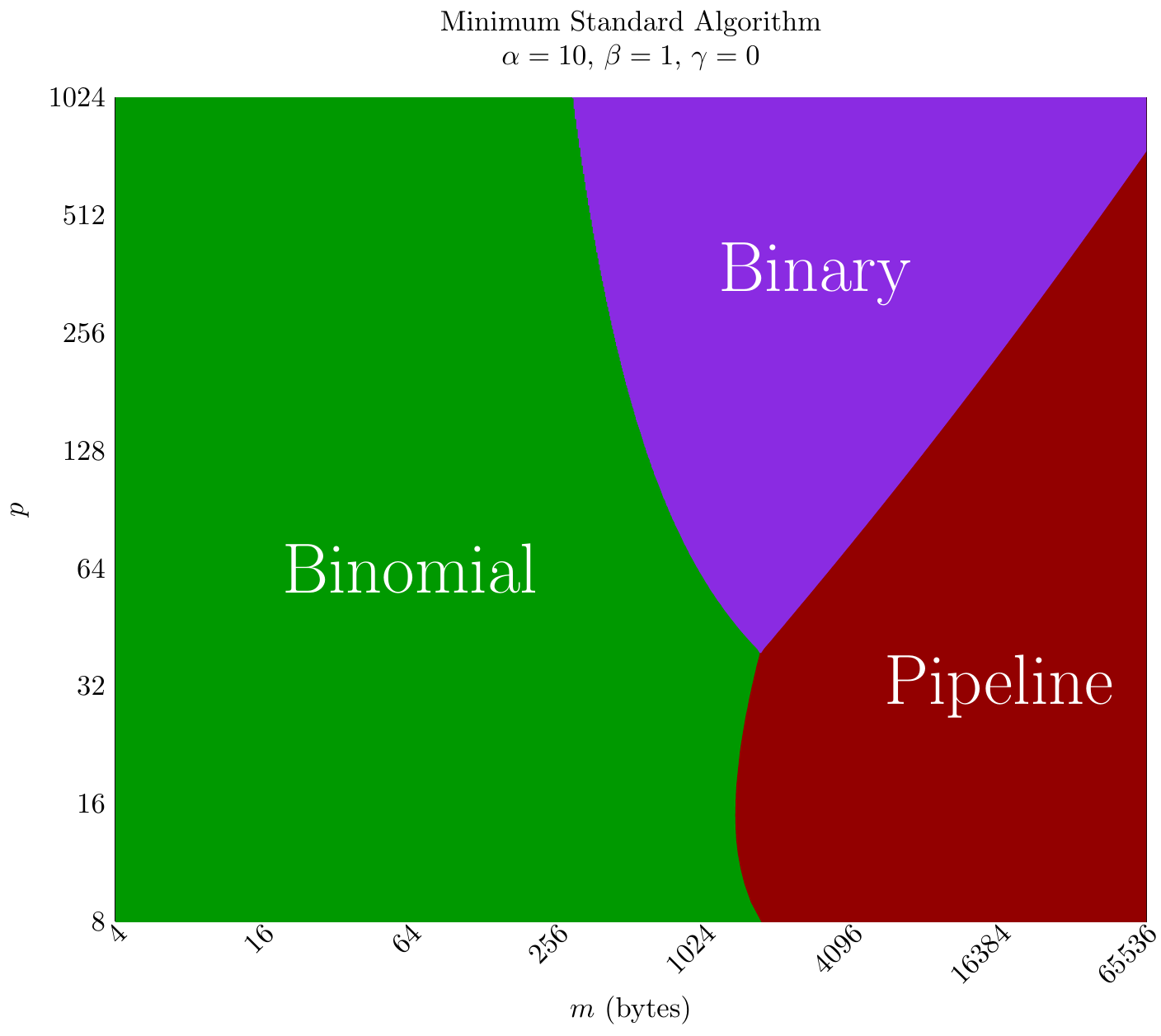}}
	\caption{\label{fig:theoretical_best_standard_algorithm}
	Regions where binomial or pipeline or binary is better in term of the number of
	processors ($p$) and the message size ($m$). For each 
	algorithm, 
	each $p$ and each $m$,
	the optimal segment size ($s_{opt}$) is obtained from the formulae given in Table \ref{time}.
	The machine parameters are $\alpha = 10$, $\beta = 1$, $\gamma = 0$ .
	}
\end{figure}

\section{A Greedy Algorithm for Unidirectional System}\label{algor}
When a message is split into segments a reduction algorithm consists of scheduling $(p-1)q$ 
send/receive pairs (one for each segment of the non-root processors). 
For a segment $i$ of size $s_i$, the sending
processor in a communication will be available after $t_{comm}(i) =\alpha + \beta
s_i$, and the receiving processor will be available after $t_{comm}(i) + t_{comp}(i)$, 
where $t_{comp} = \gamma s_i$.
We will add an extra requirement that segment $i$ must be reduced on a processor before that
processor can be involved in a reduction for segment $j$, $j > i$.
This requirement provides two benefits.  First, since the schedule that is generated for a given number of
processors and number of segments will vary, we must log the current state on each processor.
With this requirement we only need a small array with a length equal to the number of processors.  Second, 
if a processor has started but has not been reduced for multiple segments then each segments current result
must be stored on the processor.  This would require additional storage space on that processor.

Since we assume that each processor must reduce the segments in order, then a reduction can be scheduled considering 
one segment at a time.  Let $I$ be a set of $n$ processors that remain to send a given segment 
(or for the root, receive the final message). 
Let $x_i$ be the time that processor $i$ completed its previous task.  We define the state of these processors as
\[X^{(n)} = \{ x_{i} | i \in I \}. \] 
The state of the processors that have sent the segment is $S^{(n)}$.  
The permutation $P(X^{(n)})$ orders the elements of $X^{(n)}$ from 
smallest to largest.  The time complexity of an algorithm is given by the final state of the root processor.
A schedule for a send/receive pair is given by
\[ (X^{(n-1)}_k,S^{(n-1)}_k) = \textmd{reduce}(X^{(n)}_k,S^{(n)}_k,a,b,t). \]
Processor $a$ sends segment $k$ to processor $b$ starting at time $t$.
$X^{(n-1)}_k$ is the updated state of the processors that remain to send
segment $k$ and $S^{(n-1)}_k$ is the state of the processors that have sent segment $k$.  
We will assume that successive send/receive pairs are scheduled in the order that the communications are started.  
This way the state of a processor that is not a memeber of the send/receive pair, but whose current state
is less than $t$ is set equal to $t$.

\begin{lemma}\label{valid:sendrecv}
	$\textmd{reduce}(X^{(n)},S^{(n)},a,b,t)$, where $P(X^{(n)}) = (x_{g_1},x_{g_2},\dots,x_{g_n})$, is a 
	valid \newline send/receive pair if and only if 
	\begin{enumerate}[(i)]
		\item\label{v1} $x_a \le t$ and  $x_b \le t$;
		\item\label{v2} $a \ne 0$; and 
		\item\label{v3} $t \ge x_{g_2}$. 
	\end{enumerate} 
\end{lemma}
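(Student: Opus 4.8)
The plan is to prove the biconditional by establishing each implication separately, reading conditions (\ref{v1})--(\ref{v3}) as a checklist that encodes the three distinct requirements a legitimate send/receive pair must satisfy: the physical availability of the two participants, the special role of the root, and the consistency of the start time $t$ with the ordered-scheduling convention set up just before the lemma.

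For the forward direction (a valid pair forces (\ref{v1})--(\ref{v3})), I would argue directly from the model of Section~\ref{model}. Because the system is unidirectional, neither $a$ nor $b$ can begin the communication at time $t$ before finishing its previously scheduled task; since $x_a$ and $x_b$ record exactly those completion times, we must have $x_a \le t$ and $x_b \le t$, which is (\ref{v1}). Condition (\ref{v2}) follows from the semantics of the reduction: the root $0$ is the sole destination accumulating the final result and so never forwards its segment, hence the sender $a$ cannot be the root. For (\ref{v3}) I would use that a send/receive pair consists of two \emph{distinct} processors $a \ne b$, both available at $t$ by (\ref{v1}); since $x_a$ and $x_b$ are then two distinct entries of the multiset $X^{(n)}$, the larger of them occupies position two or later of the sorted sequence $P(X^{(n)})$, so $t \ge \max(x_a, x_b) \ge x_{g_2}$.

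For the reverse direction I would take (\ref{v1})--(\ref{v3}) as hypotheses and check that $\textmd{reduce}(X^{(n)},S^{(n)},a,b,t)$ is executable and yields a well-defined next state consistent with the scheduling convention. Conditions (\ref{v1}) and (\ref{v2}) certify that $a$ (a non-root sender) and $b$ (the receiver) are free at $t$ and occupy admissible roles. The content of (\ref{v3}) is that $t$ is no earlier than $x_{g_2}$, the first instant at which two processors are simultaneously idle; this is precisely what makes the pair formable and the bumping rule coherent, since every non-participating processor whose state lies below $t$ is advanced to $t$ without any state ever moving backward. It follows that $(X^{(n-1)},S^{(n-1)})$ is a legitimate state and the pair is valid.

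The main obstacle I anticipate is pinning down the exact meaning of \emph{valid} relative to the ordered-start-time and bumping convention, and clarifying the status of (\ref{v3}) in particular: because a send/receive pair requires $a \ne b$, condition (\ref{v3}) is in fact a consequence of (\ref{v1}), so the delicate point is to present it as the operative feasibility condition expressed in the sorted-state language the greedy algorithm will exploit (``two processors must be idle before any pair can start'') rather than as independent new information. Stating the interplay between (\ref{v1}) and (\ref{v3}), together with the precise role of the distinctness assumption $a \ne b$, cleanly is where the argument will need the most care.
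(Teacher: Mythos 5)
Your proposal is correct and follows essentially the same route as the paper: the paper's own proof simply reads (\ref{v1}) as availability of both participants, (\ref{v2}) as the root never sending, and observes that (\ref{v1}) implies (\ref{v3}) --- exactly the redundancy you identify via $a \ne b$ and the sorted order. Your version merely spells out the two directions and the $t \ge \max(x_a,x_b) \ge x_{g_2}$ step in more detail than the paper does.
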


\begin{proof}
	\eqref{v1} requires that processor $a$ and $b$ have completed their previous task before $t$. \eqref{v2} states 
	that the root can never be a sending processor.  And \eqref{v1} implies \eqref{v3}.
\end{proof}

A greedy algorithm is obtained by scheduling every send/receive pair using $t= x_{g_2}$ 
in Lemma~\ref{valid:sendrecv}.  When a segment is done being scheduled then the final state of
the processors are used as the initial state for the next segment.  Algorithm~\ref{greedy-schedule} produces the 
unidirectional greedy (uni-greedy) schedule. A proof of optimality is given in Section~\ref{opt-proof}. 

\begin{algorithm}[!h]
	$X^{(p)}_1 = \{0,\dots,0\} \; S^{(p)}_1 = \emptyset $ \\
	\For{ $i = 1$ \KwTo $q$ }{
		\For{ $j = p$ {\bf downto} $2$ }{
			$P( X^{(j)}_i ) = ( x_{g_1}, x_{g_2}, \dots x_{g_j} )$ \\
			$ t = x_{g_2} $ \\
			$ ( X^{(j-1)}_i, S^{(j-1)}_i ) = \textmd{reduce}( X^{(j)}_i, S^{(j)}_i, g_1, g_2, t ) $ \\
		}
		$X^{(p)}_{i+1} = X^{(1)}_i \bigcup S^{(1)}_i$ \\
	}	
	\caption{Uni-greedy Schedule}\label{greedy-schedule}
\end{algorithm}

\begin{figure*}[!p]
    \centering
    \resizebox{.85\textwidth}{!}{
        \includegraphics{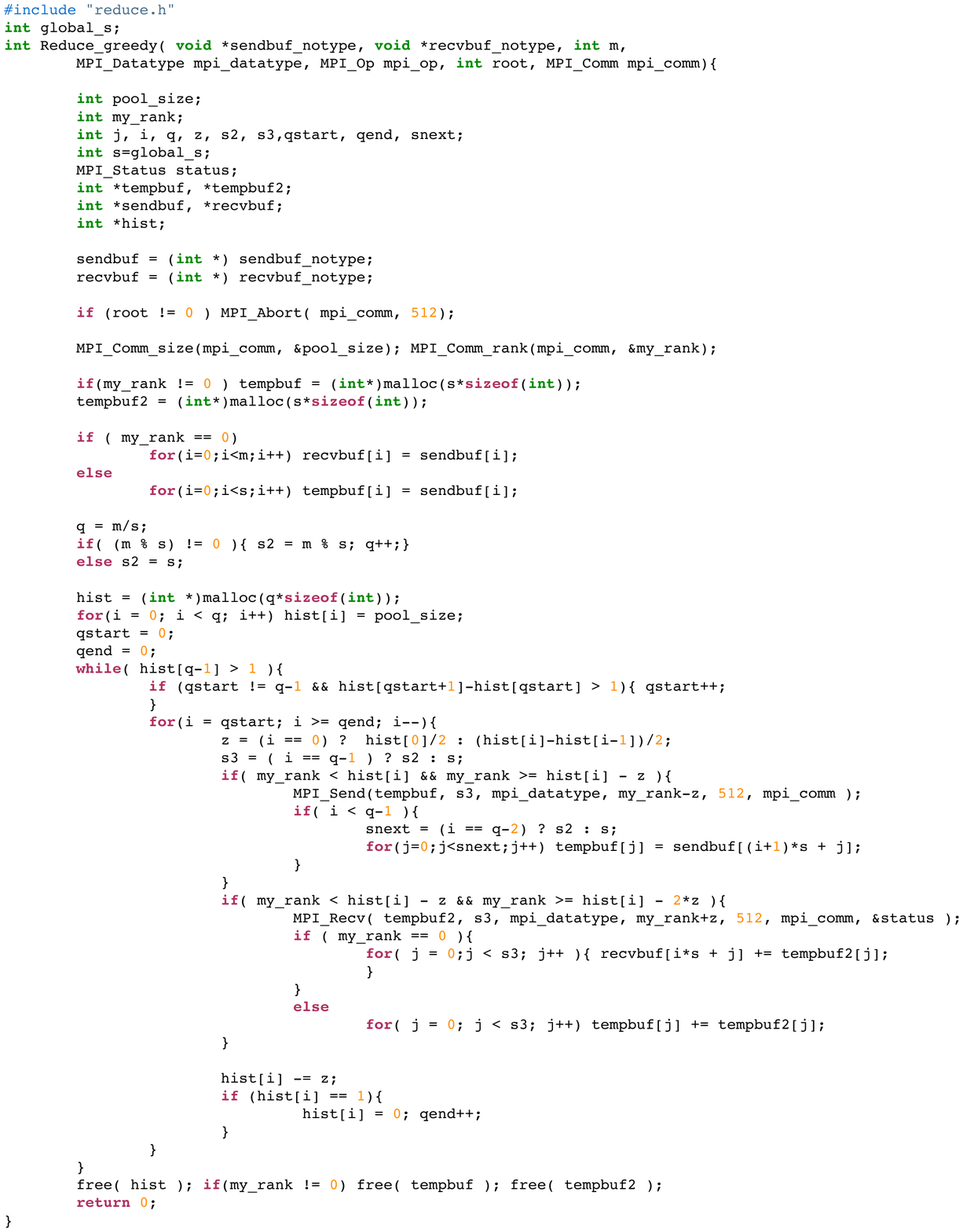}
    }
\caption{\label{fig:code_reduce_greedy}
C code for the uni-greedy reduction algorithm. 
The algorithm assumes that \texttt{MPI\_Op} is commutative (and associative as well, of course).
The algorithm allocates two extra buffers of size segment size.
The global variable \texttt{global\_s} is the size of a segment and needs to be initialized (if possible 
``tuned'') in advance. The implementation is restricted to
\texttt{root} being \texttt{0}, 
\texttt{MPI\_Datatype} being \texttt{int},
and
\texttt{MPI\_Op} being \texttt{+}.
These restrictions are not a consequence of the algorithm and can be removed.
}
\end{figure*}
\begin{figure*}[!t]
	\centering
    \resizebox{.8\textwidth}{!}{\includegraphics{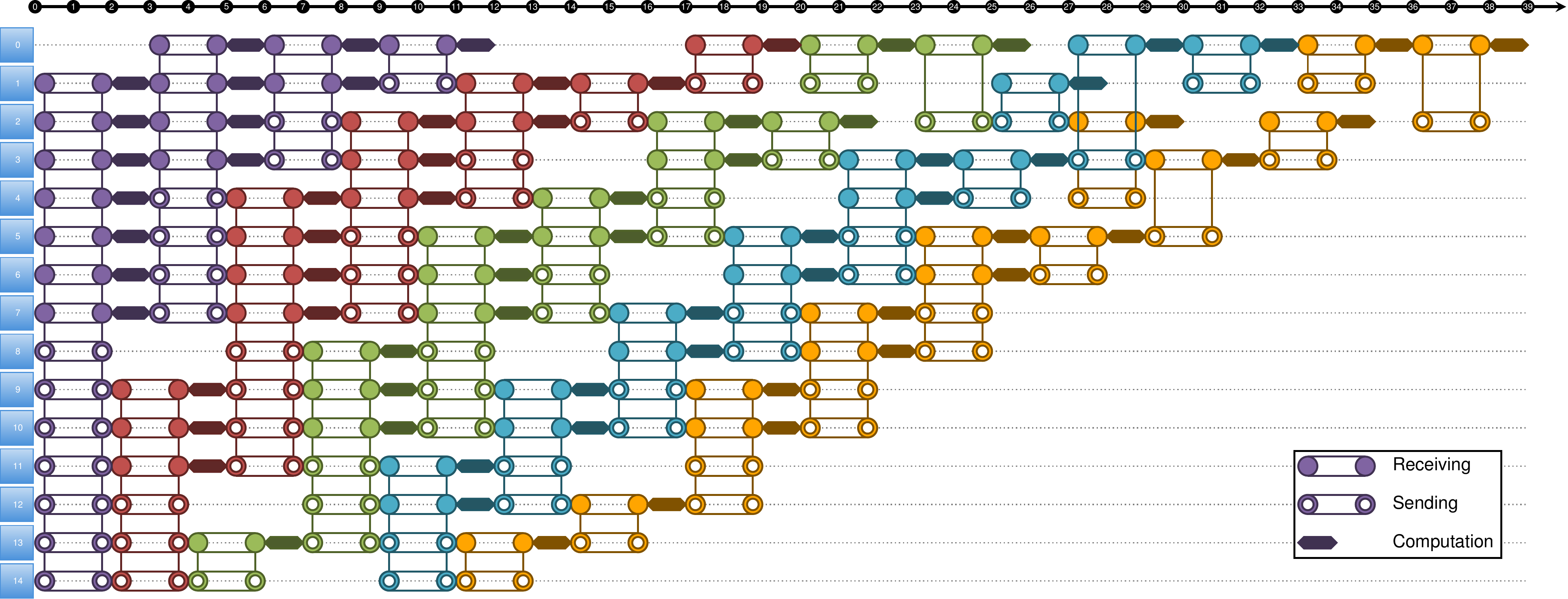}}
    \caption{\label{fig:GreedyTree} Uni-greedy algorithm for 15 processors and 5
equi-segments for $t_{comm} = 2$ and $t_{comp} = 1$. Each color represents a different segment. 
Filled in circles represent receiving processors, open circles represent sending processors, and
hexagons represent computation.}
\end{figure*}
We give the C code for the uni-greedy algorithm on the next page in Figure~\ref{fig:code_reduce_greedy}.
In this code, we are able to generate the greedy schedule dynamically during the reduction rather
than precomputing the schedule by Algorithm~\ref{greedy-schedule}.
The algorithm assumes that \texttt{MPI\_Op} is commutative (and associative as well, of
course).  The algorithm allocates two extra buffers of size segment size.  The variable \texttt{s} is the 
size of a segment and needs to be initialized (if possible ``tuned'') in advance.  The implementation is
restricted to \texttt{root} being \texttt{0}, \texttt{MPI\_Datatype} being
\texttt{int}, and \texttt{MPI\_Op} being \texttt{+}.  These restrictions are not
a consequence of the algorithm and can be removed.  In the next section we
analyze the theoretical time for the greedy algorithm for various values of
$\alpha$, $\beta$, and $\gamma$.

No closed-form expression for the time complexity of the uni-greedy algorithm exists 
like those given in Table~\ref{time} for the standard algorithms. 
Figure~\ref{fig:GreedyTree} gives an example for 15 processors and 5 equi-segments with $t_{comm} = 2$ and
$t_{comp} = 1$. From time 12 to 15 provides an example where if we allowed segments to be reduced out of order
then the uni-greedy algorithm would not be optimal.  Without this restriction processor
8 could send segment 4 to processor 0 and the overall time complexity could be reduced by 1.  
We choose to omit the figure that shows the improvement.

%
%

\subsection{Proof of Optimality}\label{opt-proof}

In this section, we prove that the uni-greedy algorithm is optimal.  The proof is
a variation of the proof in~\cite{CosnardRobert:86}.
Using the notation as in Lemma~\ref{valid:sendrecv}, we define 
\[ \textmd{optreduce}(X^{(n)}) = \textmd{reduce}(X^{(n)},S^{(n)},g_1,g_2,x_{g_2} ). \]

We also define the partial ordering on $\mathbb{R}^n$:
\[ X \le Y \iff x_i \le y_i, \; 1 \le i \le n. \]

\begin{lemma}\label{basiclemma}
	Let $X \in \mathbb{R}^n$ and $Y \in \mathbb{R}^n$ be nondecreasing  such that $X \le Y$ and let $a \in \mathbb{R}$ 
	and $b \in \mathbb{R}$ such that $a \le b$. If $x_k \le a \le x_{k+1}$ and $y_l \le b \le y_{l+1}$, then 
	\[(x_1,\dots,x_k,a,x_{k+1},\dots,x_n) \le (y_1,\dots,y_l,b,y_{l+1},\dots,y_n). \] 

\end{lemma}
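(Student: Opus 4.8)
The plan is to reduce the claim to a statement about order statistics. First I would observe that both sides are themselves nondecreasing vectors of length $n+1$: since $x_k \le a \le x_{k+1}$ and $X$ is nondecreasing, the left-hand vector $\tilde X := (x_1,\dots,x_k,a,x_{k+1},\dots,x_n)$ is exactly the sorted arrangement of the multiset $\{x_1,\dots,x_n,a\}$, and likewise $\tilde Y := (y_1,\dots,y_l,b,y_{l+1},\dots,y_n)$ is the sorted arrangement of $\{y_1,\dots,y_n,b\}$. Hence $\tilde X_i$ and $\tilde Y_i$ are the $i$-th order statistics of these two multisets, and the goal becomes $\tilde X_i \le \tilde Y_i$ for every $i \in \{1,\dots,n+1\}$.

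The clean route to this is an order-statistics monotonicity argument. I would introduce the bijection $\phi$ between the two multisets defined by $x_i \mapsto y_i$ and $a \mapsto b$; by hypothesis every element is sent to one at least as large, since $x_i \le y_i$ and $a \le b$. For a fixed threshold $v$, any element of the second multiset that is $\le v$ then has a $\phi$-preimage in the first multiset that is also $\le v$, so the first multiset has at least as many elements below $v$ as the second. Choosing $v = \tilde Y_i$, which has at least $i$ elements of the second multiset at or below it, forces the first multiset to have at least $i$ elements at or below $v$, i.e. $\tilde X_i \le v = \tilde Y_i$. This establishes all $n+1$ inequalities simultaneously.

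If instead a fully self-contained elementary argument is preferred, I would split into the cases $k \le l$ and $k > l$ and check the inequality entry by entry. In the aligned regions (indices below $\min(k,l)$ and above $\max(k,l)+1$) it collapses to the hypothesis $X \le Y$. The delicate band is the set of indices lying between the two insertion points, where one vector already carries its inserted element while the other still carries an original entry; there I would chain the available inequalities, for example in the case $k < l$ using $a \le x_{k+1} \le y_{k+1}$ at the insertion slot of $\tilde X$, then $x_{i-1} \le x_i \le y_i$ just after it, and $x_l \le y_l \le b$ to clear the slot occupied by $b$.

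The main obstacle is precisely this misalignment of the insertion positions: because $k$ and $l$ need not coincide, the $i$-th entry of $\tilde X$ may be an original $x$ while the $i$-th entry of $\tilde Y$ is the freshly inserted $b$ (or vice versa), so one cannot simply quote $X \le Y$ componentwise. The order-statistics reformulation is attractive exactly because it dissolves this bookkeeping, together with the boundary insertions $k=0$ or $k=n$, into a single counting inequality. I therefore expect the counting argument to be the cleanest path and would make it the centerpiece of the proof.
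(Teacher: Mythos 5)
Your proposal is correct, and its centerpiece is a genuinely different argument from the paper's. The paper proves the lemma by direct case analysis on the relative position of the insertion indices: the indices $i \le \min(k,l)$ and $i \ge \max(k,l)+1$ reduce to the hypothesis $X \le Y$, the case $k<l$ is handled by chaining $a \le x_{k+1} \le y_{k+1}$, $x_l \le y_l \le b$, and $x_i \le y_i \le y_{i+1}$ in the intermediate band, and the case $k>l$ is resolved by observing that $a \ge x_k \ge x_{l+1} \ge y_{l+1} \ge b$ together with $a \le b$ forces $a=b$, collapsing the band to equalities. Your secondary, ``fully self-contained'' sketch is essentially this same proof (though you only spell out $k<l$; be aware that $k>l$ is the subtle case, resolved by the forced equality $a=b$). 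Your primary argument --- viewing the two sides as the order statistics of the multisets $\{x_1,\dots,x_n,a\}$ and $\{y_1,\dots,y_n,b\}$ and using the bijection $x_i \mapsto y_i$, $a \mapsto b$ that never decreases an element to show the counting inequality $\#\{u \in A : u \le v\} \ge \#\{w \in B : w \le v\}$ for every threshold $v$ --- is valid and strictly cleaner: it dispenses with the three-way case split, uniformly covers the boundary insertions $k=0$ and $k=n$, and would generalize to inserting several elements at once. What the paper's route buys in exchange is that it is entirely elementary and explicit, exhibiting exactly which chained inequalities are used at each index, which makes it easy to audit in the context where the lemma is invoked (Lemmas~\ref{sendrecv} and~\ref{iter}). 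Either proof is acceptable; there is no gap in yours.
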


\begin{proof}
	$\forall i \; \textmd{s.t.} \; i \le \min(k,l),$
	\[ x_i \le y_i. \]
	$\forall i \; \textmd{s.t.} \; i \ge \max(k,l)+1,$
	\[ x_i \le y_i. \]  
	
	If $k = l$, then $a \le b$ is the only other comparison needed. 
	
	If $k < l$, then 
	\[ a \le x_{k+1} \le y_{k+1}, \] 
	\[ x_l \le y_l \le b, \]  
	and $\forall i \; \textmd{s.t.} \; k < i < l,$
	\[ x_i \le y_i \le y_{i+1}. \]

	If $k > l$, then 
	\[ a \ge x_k \ge x_{l+1} \ge y_{l+1} \ge b. \]  
	By assumption $a \le b$, therefore $a=b$. Hence, 
	\[ x_{l+1} = a = b, \] 
	\[ a = x_k \le y_k, \] 
	and $\forall i \; \textmd{s.t.} \; l+1 < i \le k,$ 
	\[ x_i = a = b \le y_{i-1}. \]
	
	Hence, for all cases the inequality is satisfied.	
\end{proof}

\begin{lemma}\label{sendrecv}
	Let 
	\[ (X^{(n-1)},S^{(n-1}) = \textmd{optreduce}(X^{(n)}) \] 
	and 
	\[ (Y^{(n-1)},R^{(n-1)}) = \textmd{reduce}(Y^{(n)},R^{(n)},c,d,t_n). \] 
	
	\noindent If $P(X^{(n)}) \le P(Y^{(n)})$, then
	\begin{enumerate}[(i)]
		\item\label{remaining}  $P(X^{(n-1)}) \le P(Y^{(n-1)})$; and 
		\item\label{senders} 	$P(S^{(n-1)}) \le P(R^{(n-1)})$.
	\end{enumerate}
\end{lemma}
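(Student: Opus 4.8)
The plan is to reduce both conclusions to the insertion principle of Lemma~\ref{basiclemma}: I would write each post-reduce sorted state as a ``base'' vector of length $n-2$ together with one extra entry inserted in its sorted position, then control the base vectors and the inserted entries separately.

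First I would record how the two operations act on the sorted states. Write $P(X^{(n)}) = (x_{g_1},\dots,x_{g_n})$ and $P(Y^{(n)}) = (y_{h_1},\dots,y_{h_n})$. The greedy step $\textmd{optreduce}$ sends $g_1$ to $g_2$ at time $t = x_{g_2}$: it deletes the sender value $x_{g_1}$, replaces the receiver value $x_{g_2}$ by $x_{g_2} + t_{comm} + t_{comp}$, and leaves every other entry untouched (all remaining entries are already $\ge x_{g_2}=t$, so the ``bump up to $t$'' rule is inactive). Hence $P(X^{(n-1)})$ is the sorted insertion of $a := x_{g_2} + t_{comm} + t_{comp}$ into the length-$(n-2)$ vector $\widetilde X := (x_{g_3},\dots,x_{g_n})$. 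The general step on $Y$ deletes $y_c$ and $y_d$, raises every remaining entry below $t_n$ up to $t_n$, and inserts the receiver value $b := t_n + t_{comm} + t_{comp}$; so $P(Y^{(n-1)})$ is the sorted insertion of $b$ into $\widetilde Y$, the sorted version of the multiset $\{\max(y_e,t_n) : e \ne c,d\}$.

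Two ingredients then remain: $a \le b$, and $\widetilde X \le \widetilde Y$. The first is immediate: validity~\eqref{v3} of the $Y$-step gives $t_n \ge y_{h_2}$, and the hypothesis $P(X^{(n)}) \le P(Y^{(n)})$ gives $x_{g_2} \le y_{h_2}$, so $x_{g_2} \le t_n$ and therefore $a \le b$. Granting $\widetilde X \le \widetilde Y$, Lemma~\ref{basiclemma} applied to these two base vectors with inserted scalars $a \le b$ yields exactly \eqref{remaining}. I expect the one genuine obstacle to be this base-vector comparison, and this is where the ``bump up to $t_n$'' rule is indispensable: without it, deleting the smallest entry of $X$ but a larger entry of $Y$ can break the componentwise order (e.g. $(0,5)\le(3,5)$ becomes $(5)\not\le(3)$), so one cannot naively compare ``delete the two smallest'' against ``delete two arbitrary valid entries.'' With the bump rule I would argue as follows: by validity~\eqref{v1}, $y_c, y_d \le t_n$, so their images in the bumped multiset are both $t_n$; writing $r \ge 2$ for the number of indices with $y_{h_j} \le t_n$, this forces $\widetilde Y = (\,\underbrace{t_n,\dots,t_n}_{r-2},\,y_{h_{r+1}},\dots,y_{h_n})$. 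Comparing coordinatewise with $\widetilde X = (x_{g_3},\dots,x_{g_n})$, coordinate $i$ asks whether $x_{g_{i+2}} \le (\widetilde Y)_i$: for $i \le r-2$ one has $(\widetilde Y)_i = t_n$ and $x_{g_{i+2}} \le y_{h_{i+2}} \le t_n$ (the last inequality because $i+2\le r$), while for $i \ge r-1$ one has $(\widetilde Y)_i = y_{h_{i+2}}$ and the bound is just the hypothesis at coordinate $i+2$. Hence $\widetilde X \le \widetilde Y$, and \eqref{remaining} follows.

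Finally, \eqref{senders} is the same argument one level simpler. The two steps append to the sender sets the completion times $x_{g_2} + t_{comm}$ and $t_n + t_{comm}$ respectively, and $x_{g_2} \le t_n$ already gives $x_{g_2}+t_{comm} \le t_n+t_{comm}$; inserting these into $S^{(n)}$ and $R^{(n)}$ and invoking Lemma~\ref{basiclemma} (using the companion order $P(S^{(n)}) \le P(R^{(n)})$, which must be carried alongside $P(X^{(n)}) \le P(Y^{(n)})$ as a joint invariant in the optimality induction, since the end-of-segment merge re-injects $S$ into the next $X$) yields $P(S^{(n-1)}) \le P(R^{(n-1)})$.
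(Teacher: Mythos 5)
Your proof is correct and follows essentially the same route as the paper's: both arguments strip the receiver's updated value from each post-step state, verify the componentwise ordering of the remaining sorted (bumped) vectors, and then invoke the insertion lemma (Lemma~\ref{basiclemma}) with $x_{g_2}+t_{comm}+t_{comp} \le t_n+t_{comm}+t_{comp}$, handling part~(ii) by the same induction on the appended sender completion times. Your explicit counterexample motivating the bump-to-$t$ rule and your single-parameter $r$ bookkeeping are a slightly cleaner packaging of the paper's case analysis on $k$ versus $l$, but the substance is identical.
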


\begin{proof}
	Let $P(X^{(n)}) = (x_{g_1},x_{g_2},\dots,x_{g_n})$ and $P(Y^{(n)}) = (y_{h_1},y_{h_2},\dots,y_{h_n})$. 
	The receiving processors will be updated to $x_{g_2}+t_{comp}+t_{comm}$ and $t+t_{comp}+t_{comm}$,
	respectively.
	Let 
	\[ X^* = P( X^{(n-1)} \setminus \{ x_{g_2}+t_{comp}+t_{comm}\} ) \]  
	and 
	\[ Y^* = P( Y^{(n-1)} \setminus \{ t+t_{comp}+t_{comm}\} ). \]  
	
	We will show that $X^* \le Y^*$.
	Let $x_{g_k}$ be the smallest state greater than $x_{g_2}$ and $y_{h_l}$ be the smallest state 
	greater than $t$. Then 
	\[ X^* = (x_{g_2}, \dots, x_{g_2}, x_{g_k},\dots, x_{g_n} ) \]
	\noindent and
	\[ Y^* = (t,\dots, t, y_{h_l}, \dots, y_{h_n}). \]
	$\forall i \; \textmd{s.t.} \; i <   \min(k,l),$
	\[ x_{g_2} \le y_{h_2} \le t. \]
	$\forall i \; \textmd{s.t.} \; i \ge \max(k,l),$
	\[ x_{g_i} \le y_{h_i}. \]
	If $k < l$, then $\forall i \; \textmd{s.t.} \; k \le i < l,$ 
		\[ x_{g_i} \le y_{h_i} \le t. \]
	If $k > l$, then $\forall i \; \textmd{s.t.} \; l \le i < k,$ 
		\[ x_{g_i} = x_{g_2} \le y_{h_2} \le t \le  y_{h_i}. \]
	We have shown $X^* \le Y^*$ and $x_{g_2}+t_{comp}+t_{comm} \le t+t_{comp}+t_{comm}$.
	Therefore, by Lemma~\ref{basiclemma}, $P(X^{(n-1)}) \le P(Y^{(n-1)})$.

	Now we will prove the second claim by induction. The state of the sending processors are updated to 
	$x_{g_2} + t_{comm}$ and $t + t_{comm}$ and clearly, $x_{g_2} + t_{comm} \le t + t_{comm}$.
	For $n=p$, we have $S^{(p-1)} \le R^{(p-1)}$.  Assume that 
	$P(S^{(n)}) \le P(R^{(n)})$, by Lemma~\ref{basiclemma}, \eqref{senders} follows.

\end{proof} 

We can now define an iteration on segment $k$ with initial state $X_k^{(p)}$ as repeatedly applying $p-1$ send/receive
pairs.  Let $X_{k+1}^{(p)} = \textmd{optiter}(X_k^{(p)})$ be the iteration that applies $\textmd{optreduce}$ 
at every step. The output is the collection of states of the sending processor from each step and the state of the
root processor after finishing the final computation.  Let
$X_{k+1}^{(p)} = \textmd{iter}(X_k^{(p)})$ be an iteration that applies any set of $\textmd{reduce}$ operations. 

\begin{lemma}\label{iter}
	If $P(X_k^{(p)}) \le P(Y_k^{(p)})$ and $X_{k+1}^{(p)} = \textmd{optiter}(X_k^{(p)})$ and 
	$Y_{k+1}^{(p)} = \textmd{iter}(Y_k^{(p)})$, then 
	$P(X_{k+1}^{(p)}) \le P(Y_{k+1}^{(p)})$.
\end{lemma}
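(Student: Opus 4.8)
The plan is to unroll a single iteration into its $p-1$ successive send/receive pairs and to push the hypothesis $P(X_k^{(p)}) \le P(Y_k^{(p)})$ through the iteration one step at a time by repeated application of Lemma~\ref{sendrecv}. The key observation is that Lemma~\ref{sendrecv} must be applied while carrying \emph{two} invariants in parallel: the domination of the not-yet-sent states, $P(X^{(n)}) \le P(Y^{(n)})$, and the domination of the accumulated sender states, $P(S^{(n)}) \le P(R^{(n)})$. Maintaining both simultaneously is what lets a single step feed cleanly into the next.

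First I would set up an induction on the step index. At the start of the iteration $n = p$, $S^{(p)} = R^{(p)} = \emptyset$, and $P(X^{(p)}) \le P(Y^{(p)})$ holds by hypothesis, so both invariants are trivially satisfied. For the inductive step, assume both invariants hold for some $n$ with $2 \le n \le p$. Applying $\textmd{optreduce}$ on the left and an arbitrary valid $\textmd{reduce}$ on the right is exactly the configuration of Lemma~\ref{sendrecv}: conclusion \eqref{remaining} delivers $P(X^{(n-1)}) \le P(Y^{(n-1)})$, re-establishing the first invariant, and conclusion \eqref{senders} delivers $P(S^{(n-1)}) \le P(R^{(n-1)})$, re-establishing the second. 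Since the reduction of a single segment consists of exactly $p-1$ send/receive pairs in both $\textmd{optiter}$ and $\textmd{iter}$, after $p-1$ steps I reach $n = 1$, at which point $P(X^{(1)}) \le P(Y^{(1)})$ (a single element, the final state of the root after its last computation) and $P(S^{(1)}) \le P(R^{(1)})$ (the states of all $p-1$ senders).

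It then remains to merge these two facts into the stated conclusion. By Algorithm~\ref{greedy-schedule}, the state handed to the next segment is $X_{k+1}^{(p)} = X^{(1)} \cup S^{(1)}$ on the left and $Y_{k+1}^{(p)} = Y^{(1)} \cup R^{(1)}$ on the right. Because $X^{(1)}$ and $Y^{(1)}$ are single values with $X^{(1)} \le Y^{(1)}$, inserting each root state into its sorted sender list is precisely the hypothesis of Lemma~\ref{basiclemma}, with $P(S^{(1)})$ and $P(R^{(1)})$ playing the roles of the two nondecreasing vectors and the root states playing the roles of $a$ and $b$. Lemma~\ref{basiclemma} then yields $P(X_{k+1}^{(p)}) \le P(Y_{k+1}^{(p)})$, which is exactly the claim.

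The main obstacle is bookkeeping rather than any new idea: I must keep the two domination chains synchronized through every step, because Lemma~\ref{sendrecv}\eqref{senders} takes the incoming sender domination $P(S^{(n)}) \le P(R^{(n)})$ as part of its hypothesis, so the sender invariant has to be fed back into each subsequent application rather than derived afresh. Once both invariants are maintained in lockstep, the arbitrary choice of communication pairs in $\textmd{iter}$ causes no trouble, since Lemma~\ref{sendrecv} already compares the greedy $\textmd{optreduce}$ against an arbitrary valid $\textmd{reduce}$, and the final merge via Lemma~\ref{basiclemma} is routine.
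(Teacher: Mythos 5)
Your proposal is correct and follows essentially the same route as the paper: repeatedly apply Lemma~\ref{sendrecv} through the $p-1$ send/receive pairs of the iteration, carrying both the remaining-processor and sender-state dominations, and then merge $X^{(1)} \cup S^{(1)}$ against $Y^{(1)} \cup R^{(1)}$ via Lemma~\ref{basiclemma}. Your version is somewhat more explicit than the paper's about the need to maintain the sender invariant $P(S^{(n)}) \le P(R^{(n)})$ as a standing hypothesis at each step, which is a point the paper glosses over but which your bookkeeping handles correctly.
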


\begin{proof}
	By Lemma~\ref{sendrecv} the state vectors of the remaining processors at every step will satisfy the 
	conditions of Lemma~\ref{sendrecv}.  Hence, we can conclude that $S^{(1)}_k \le R^{(1)}_k$ and 
	$X^{(1)}_k \le Y^{(1)}_k$, where 
	$( X^{(1)}_k, S^{(1)}_k ) = \textmd{optreduce}(X^{(2)}_k)$ and
	$( Y^{(1)}_k, R^{(1)}_k ) = \textmd{reduce}(Y^{(2)}_k)$.
	The initial states for segment $k+1$ are $X_{k+1}^{(p)} = S^{(1)}_k \bigcup X^{(1)}_k$ and
	$Y_{k+1}^{(p)} = R^{(1)}_k \bigcup Y^{(1)}_k$.
	By Lemma~\ref{basiclemma}, \\ 
	$P(X_{k+1}^{(p)}) \le P(Y_{k+1}^{(p)}).$
\end{proof}

A reduction algorithm is obtained by repeatedly applying an iteration for every segment using the ending state vector
of one segment as the initial state vector of the following.  We obtain the greedy algorithm by always applying
$\textmd{optiter}$.  Any reduction algorithm is obtained by repeatedly applying any iteration. 
 
\begin{theorem}\label{optgreedy}
	In a unidirectional, fully connected, homogeneous system the time complexity of the uni-greedy algorithm minimal among all reduction algorithms
	that reduce segments in order.
\end{theorem}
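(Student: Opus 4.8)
The plan is to reduce the statement about completion times to the componentwise comparison of sorted state vectors, which Lemma~\ref{iter} already controls one segment at a time. Fix an arbitrary reduction algorithm that reduces segments in order, and let $Y_1^{(p)},Y_2^{(p)},\dots,Y_{q+1}^{(p)}$ be the sequence of state vectors it produces, where $Y_{k+1}^{(p)} = \textmd{iter}(Y_k^{(p)})$ for the particular choice of $\textmd{reduce}$ operations that algorithm makes at each of the $p-1$ steps. Let $X_1^{(p)},\dots,X_{q+1}^{(p)}$ be the corresponding vectors for the uni-greedy algorithm, so that $X_{k+1}^{(p)} = \textmd{optiter}(X_k^{(p)})$. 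Both algorithms start from the same initial configuration $X_1^{(p)} = Y_1^{(p)} = \{0,\dots,0\}$, and by definition the time complexity of each is the final state of the root, namely the root entry of $X_{q+1}^{(p)}$ and of $Y_{q+1}^{(p)}$.

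First I would establish $P(X_{q+1}^{(p)}) \le P(Y_{q+1}^{(p)})$ by induction on the segment index $k$. The base case is immediate: the initial vectors coincide, so $P(X_1^{(p)}) = P(Y_1^{(p)})$ and in particular $P(X_1^{(p)}) \le P(Y_1^{(p)})$. For the inductive step, assume $P(X_k^{(p)}) \le P(Y_k^{(p)})$. Since the greedy iteration is exactly $\textmd{optiter}$ and the competitor's iteration is some $\textmd{iter}$, Lemma~\ref{iter} applies verbatim and yields $P(X_{k+1}^{(p)}) \le P(Y_{k+1}^{(p)})$. Chaining this from $k=1$ through $k=q$ gives the desired $P(X_{q+1}^{(p)}) \le P(Y_{q+1}^{(p)})$.

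It then remains to pass from this componentwise inequality of sorted vectors to an inequality of root completion times. The key observation is that the root is never a sending processor (Lemma~\ref{valid:sendrecv}\eqref{v2}), and the reduction is not finished until the root has received and combined the last segment; hence no event occurs after the root's final computation, and the root's entry is the maximal element of $X_{q+1}^{(p)}$, and likewise of $Y_{q+1}^{(p)}$. Consequently the two root times are precisely the largest components, i.e. the last entries of $P(X_{q+1}^{(p)})$ and $P(Y_{q+1}^{(p)})$, and $P(X_{q+1}^{(p)}) \le P(Y_{q+1}^{(p)})$ forces the greedy root time to be at most the competitor's. As the competitor was an arbitrary in-order reduction algorithm, the uni-greedy time complexity is minimal.

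The genuine difficulty has already been discharged in Lemmas~\ref{basiclemma}--\ref{iter}: the hard part is the exchange argument showing that greedily choosing the two earliest-available processors and launching each communication at the earliest admissible time $t=x_{g_2}$ never loses ground against any other legal $(c,d,t_n)$, which is the monotonicity of Lemma~\ref{sendrecv} together with the insertion lemma (Lemma~\ref{basiclemma}) and its delicate case analysis of where the two newly created receiver states land in the respective sorted vectors. At the level of the theorem itself the only remaining subtlety is identifying the root's completion time with the maximal component, which the ``no event after the root finishes'' argument settles, so the final assembly is routine once the lemmas are in hand.
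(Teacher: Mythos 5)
Your proposal is correct and follows essentially the same route as the paper: induct over segments using Lemma~\ref{iter} starting from identical all-zero initial states, then read off the root's completion time as the largest component of the final sorted state vector. The only difference is that you explicitly justify why the root's entry is the maximum (the root never sends and nothing happens after its last computation), a step the paper asserts without comment.
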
 

\begin{proof}
	$X_k^{(p)}$ is the state of the processors at the start of segment $k$ when applying the greedy algorithm 
	and $Y_k{(p)}$ is the state of the processors at the start of segment $k$ when applying another reduction algorithm.  
	We assume that the initial state of the processors before any reduction is done is zero. 
	That is, $X^{(p)}_1 = Y^{(p)}_1 = \{ 0,\dots 0 \}$. 
	By Lemma~\ref{iter}, $P( X_2^{(p)} ) \le P( Y_2^{(p)} )$ and by induction
	$X_{q+1}^{(p)} \le Y_{q+1}^{(p)}$. Hence, $x_0 = \max(X_{q+1}^{(p)}) \le \max(Y_{q+1}^{(p)}) = y_0$.
\end{proof}


\subsection{Theoretical Results}\label{theor}
In Section~\ref{opt-proof} we showed that given any segmentation of a message the uni-greedy algorithm is optimal
among algorithms that reduce the segments in order on a processor.  However, selecting the optimal
segmentation over all possible segmentations is difficult.  To avoid this difficulty, we restict ourselves to considering
only equi-segmentation in this section.  In Section~\ref{uneq} we will investigate unequal segmentation.

We compare the optimal segmented uni-greedy algorithm with optimally segmented
binomial, binary, and pipeline algorithms. For pipeline and binary,
the optimal segment size are obtained by the formulae in Table \ref{time}. 
For binomial, the messages where always reduced with a single segment.
The parameters used for the theoretical experiments were: $\alpha = 0,1,10,100,1000$, $\beta = 1$, $\gamma = 0,1$, $p
= 2^k, \mbox{s.t.} \; k = 2,3,\dots,10$, and $m = 2^k \; \mbox{bytes}, \mbox{s.t.} \; k = 2,3,\dots,16$. 

\begin{figure*}[!tb]
    \centerline{
        \subfloat[Message size versus ratio]{
            \label{fig:theo_p64}
			\resizebox{.45\textwidth}{!}{\includegraphics{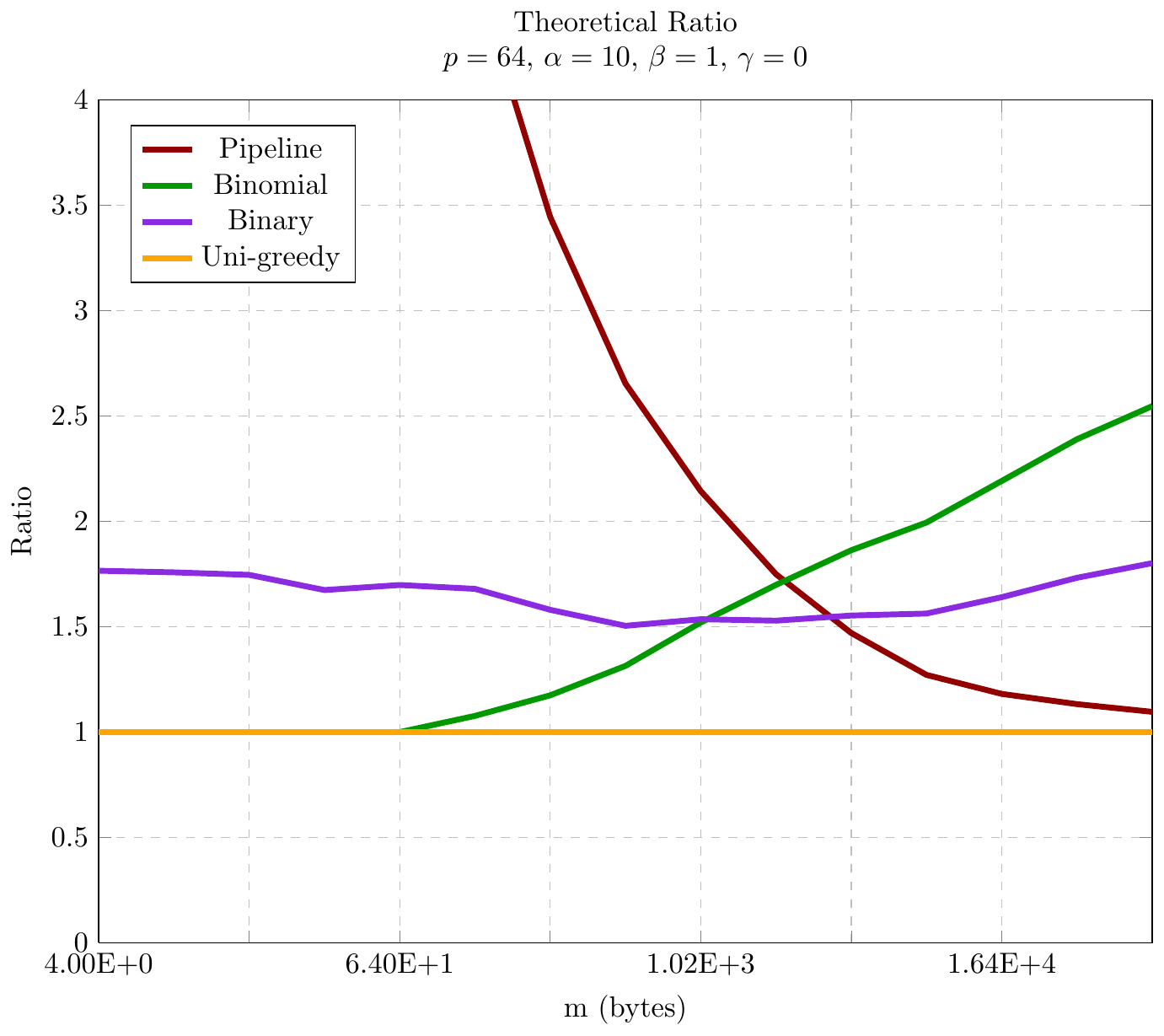}}
        }
        \hfil
        \subfloat[Message size versus time]{
            \label{fig:theo_p64_time}
			\resizebox{.45\textwidth}{!}{\includegraphics{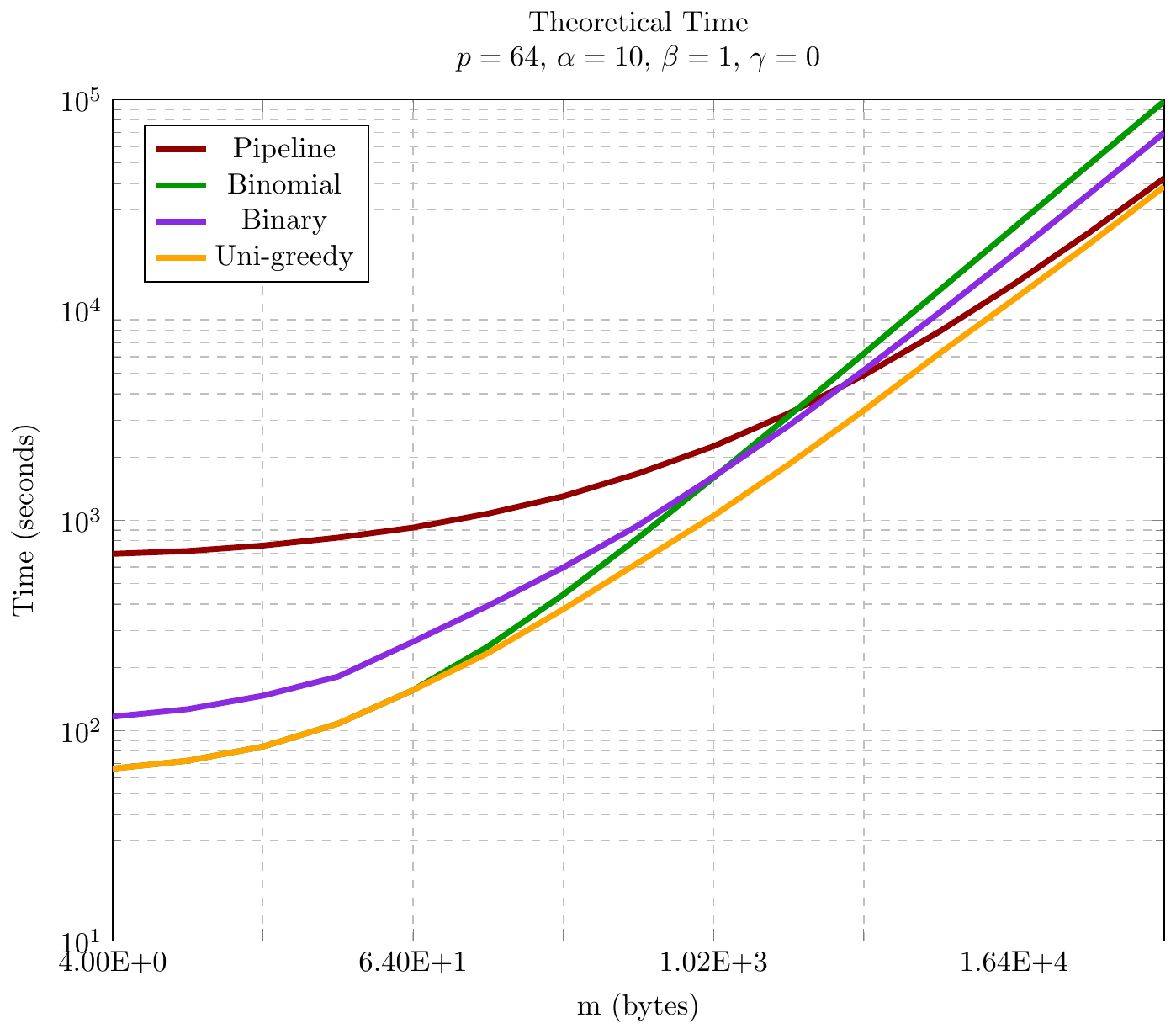}}
        }
    }
    \caption{Theoretical results for $p = 64$ plotting message size ($m$) versus ratio~\eqref{fig:theo_p64} and time~\eqref{fig:theo_p64_time}
	in a unidirectional system. 
    The machine parameters are fixed at $\alpha = 10$, $\beta = 1$, and $\gamma = 0$.}
    \label{fig:theoretical_p64}
\end{figure*}
Figures~\ref{fig:theo_p64}, \ref{fig:theo_p64_time}, and \ref{fig:theo_3d} show the results for when $\alpha = 10$,
$\beta = 1$, and $\gamma = 0$. In Figure~\ref{fig:theo_p64} and \ref{fig:theo_p64_time}, the number of processors is
fixed at 64. In Figure~\ref{fig:theo_p64} the algorithms are compared by plotting the ratio between the algorithms time and
the time for the uni-greedy algorithm. For small messages the optimal segment size for uni-greedy is 1, and uni-greedy is
exactly the binomial algorithm. So for small messages, ($m < 64$ bytes in this
example) uni-greedy and binomial have the same time. For large messages ($m >
1.64\times10^4$ bytes in this example), the poor start up of the pipeline
algorithm is negligible and the pipeline algorithm approaches the uni-greedy
algorithm asymptotically.  The middle region is where uni-greedy is the most
relevant, seeing an increase in performance over the standard algorithms up to
approximately 50\% faster.  Figure~\ref{fig:theo_p64_time} plots the time for
each algorithm.

Figure~\ref{fig:theo_3d} gives the results for each value of $p$ and $m$, the uni-greedy algorithm is
compared with the best of the standard algorithms by plotting

\[ \textmd{Ratio} = \dfrac{ \textmd{Minimum time of the standard algorithms} }{ \textmd{Time for the Greedy Algorithm} }. \]

A ratio of 1 (white in the figure) indicates that a standard algorithm has the same time as the uni-greedy algorithm. Larger numbers indicate a more significant improvement.   Again the region that uni-greedy provides the greatest improvement is for ``medium'' size messages. As the number of processors
increase, the range of message sizes where uni-greedy provides the most improvement (ratio $\approx$ 1.5) increases. This is
largely due to the difference in the number of time steps required to reduce the first segment of uni-greedy ($\lceil log_2 p
\rceil$) versus that of pipeline ($p-1$). For larger $p$, pipeline requires a larger message before it is able to ``make up'' for the extra time to reduce the first segment.

\begin{figure}[htbp]
	\centering
			\resizebox{.5\textwidth}{!}{\includegraphics{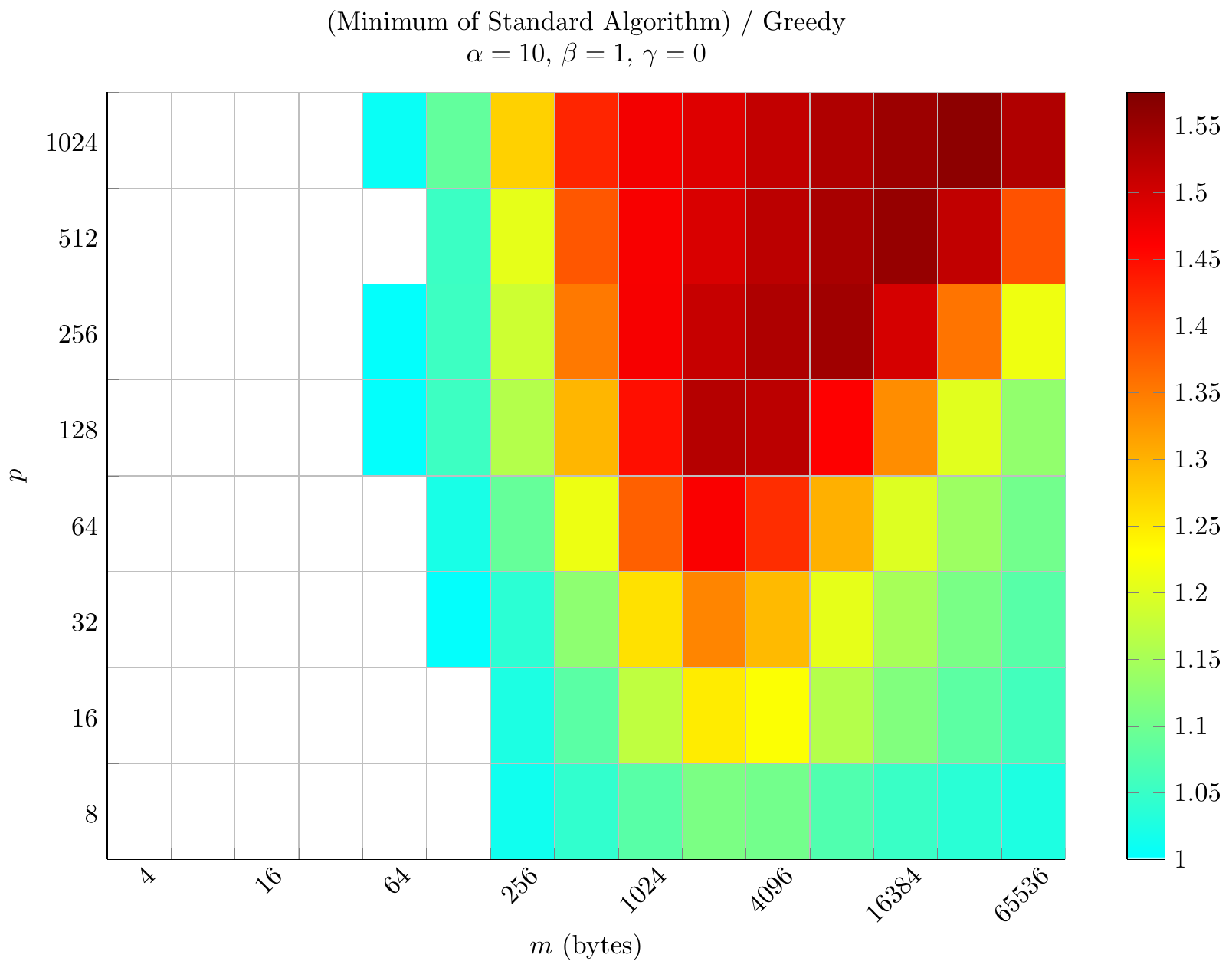}}
			\caption{Ratio between the time for the best standard algorithm and the uni-greedy algorithm for 
			varying values of $p$ (number of processors) and $m$ (message size). The machine parameters 
			are fixed at $\alpha = 10$, $\beta = 1$, and $\gamma = 0$.}
	\label{fig:theo_3d}
\end{figure}


\section{Bidirectional System}\label{bidir}
\begin{table*}[!bt]
\centering
\resizebox{.6\textwidth}{!}{%
	\begin{tabular}{|l | l | c |}
	\hline
	{Binomial} 
		& Time & 
			$\lceil log_2 p \rceil (\alpha + \beta m + \gamma m)$ \\ \hline
	\multirow{3}{*}{Pipeline} 
		& Time & 
			$(p+q-2)(\alpha + \beta s + \gamma s)$ \\\cline{2-3}
		& $s_{opt}$ & 
			$\left(\dfrac{m\alpha}{(p-2)(\beta+\gamma)}\right)^{1/2}$ \\ \cline{2-3}
		& $T_{opt}$ & 
			$\left[ \left( (p-2)\alpha \right)^{1/2} + \left( m(\beta + \gamma) \right)^{1/2}\right]^2$ \\ \hline
	\multirow{3}{*}{Binary} 
		& Time & 
			$2 \left(\lceil log_2 (p+1) \rceil +q - 1 \right)(\alpha + \beta s + \gamma s)$ \\ \cline{2-3}
		& $s_{opt}$ & 
			$\left(\dfrac{m\alpha}{(N-2)(\beta+\gamma)}\right)^{1/2}$ \\ \cline{2-3}
		& $T_{opt}$ & 
			$2\left[ \left( (N-2)\alpha\right)^{1/2} + \left( m(\beta + \gamma)\right)^{1/2} \right]^2$ \\ \hline
	\multirow{3}{*}{\mycell[c]{Bi-greedy \& \\Optimal Broadcast} } 
		& Time & 
			$(\lceil\log_2 p \rceil + q-1)(\alpha + \beta s + \gamma s)$ \\ \cline{2-3}
		& $s_{opt}$ & 
			$\left(\dfrac{m\alpha}{(\lceil\log_2 p \rceil- 1)(\beta+\gamma)}\right)^{1/2}$ \\ \cline{2-3}
		& $T_{opt}$ & 
			$\left[ \left( (\lceil\log_2 p \rceil - 1)\alpha \right)^{1/2} 
				+ \left( m(\beta + \gamma) \right)^{1/2}\right]^2$ \\ \hline
	{Butterfly} 
		& Time & 
			$2\lceil\log_2 p\rceil \alpha + 2\frac{p-1}{p} \beta m + \frac{p-1}{p} \gamma m$ \\ \hline
	\end{tabular}}
\caption{Time analysis for bidirectional reduction algorithms, where $N = \lceil \log_2(p+1)\rceil$, $s_{opt}$ is the optimal equi-segment
size, and $T_{opt}$ is the time for the algorithm at $s_{opt}$. Formulae are valid for $p>3$. The reduce-scatter/gather formula is only a lower bound when 
$p$ is not a power of two.}\label{tab.bi.time}
\end{table*}

In this section, we can adapt the uni-greedy algorithm presented in Section~\ref{algor} 
under the unidirectional context to an algorithm more suited for a bidirectional system.
We wish to compare the new algorithm to the current state-of-the-art.  We again have 
the standard algorithms: binomial, binary, and pipeline. We also consider  
a butterfly algorithm~\cite{Rabenseifner:04,RabenseifnerTraff:04} which minimizes the 
computation term of the reduction.
All of the algorithms mentioned so far work for non-commutative operations.  If the operation is
commutative, then any broadcast algorithm can be used (in reverse) for a reduction. 
Bar-Noy, et al.~\cite{BarNoy:2000} and Tr{\"a}ff and Ripke~\cite{TraffR:08} provide broadcast algorithms
that match the lower bound on the number of communication rounds for broadcasting $q$ segments among $p$ processors.
The time complexity for such a reduction algorithm is
\[ ( \lceil \log_2 p \rceil + q - 1 )(\alpha + \beta m + \gamma m). \]
From theoretical simulations it is seen that our new algorithm (bi-greedy) has the
same time complexity of a reverse optimal broadcast.
The rest of this section is organized as follows: In Section~\ref{bi-theo} we analyze the time complexity of 
the reduction algorithms. In Section~\ref{bi-greedy} we introduce the new algorithm (bi-greedy).

\subsection{Theoretical Results}\label{bi-theo}

We begin by summarizing the time complexity for each algorithm in Table~\ref{tab.bi.time}.
Binomial and butterfly do not have a choice on the segment size, so the optimal segment/time rows are
not included for those algorithms.
Binomial, binary, pipeline, and bi-greedy, all work in the same way: the message is split into $q$ segments of
equi-segment size and segments are communicated in rounds, where the time for one round is $\alpha + \beta s + \gamma s$. 
For these algorithms it is clear that bi-greedy is the best since it require the minimum number of rounds for any
number of segments.  
The butterfly algorithm does not start with a fixed segmentation of the message, but rather recursively 
halves the message size and exchanges the message between processors.  This method allows the computation
to be distributed evenly among the processors and hence minimizing the computational term~\ref{lb}.  
If the computational rate is small (i.e. $\gamma$ is large) than minimizing the computation term
can be advantageous.  

Figure~\ref{fig:theo-bi-ratio} and 
\ref{fig:theo-bi-time} show the results for when $p = 64$, $\alpha = 50000$, $\beta = 6$ and $\gamma = 1$. 
In Figure~\ref{fig:theo-bi-ratio} the ratio between the time of the algorithm and the time of bi-greedy is shown.
When comparing bi-greedy with binomial, binary, and pipeline, we see similar results as in the unidirectional 
case.  For small messages ($m < 10$ KB in this example), bi-greedy and binomial have the same time.
For large messages ($m > 10^5$ KB in this example), pipeline approaches the bi-greedy algorithm asymptotically.
For medium length messages ($200 < m < 4000$ KB in this example), bi-greedy provides approximately 
a 50\% increase in performance.
In this example, the butterfly algorithm performs well for medium size messages ($m \approx 300$~KB in this example) 
but does not provide good asymptotic behavior for large and small message sizes.  This is not a surprise as
Rabenseifner provides tuning experiments to switch to the binomial algorithm for small messages, and the 
pipeline algorithm for large messages~\cite{Rabenseifner:04}.  

\begin{figure*}[htbp]
    \centerline{
        \subfloat[Message size versus ratio]{
            \label{fig:theo-bi-ratio}
			\resizebox{.45\textwidth}{!}{\includegraphics{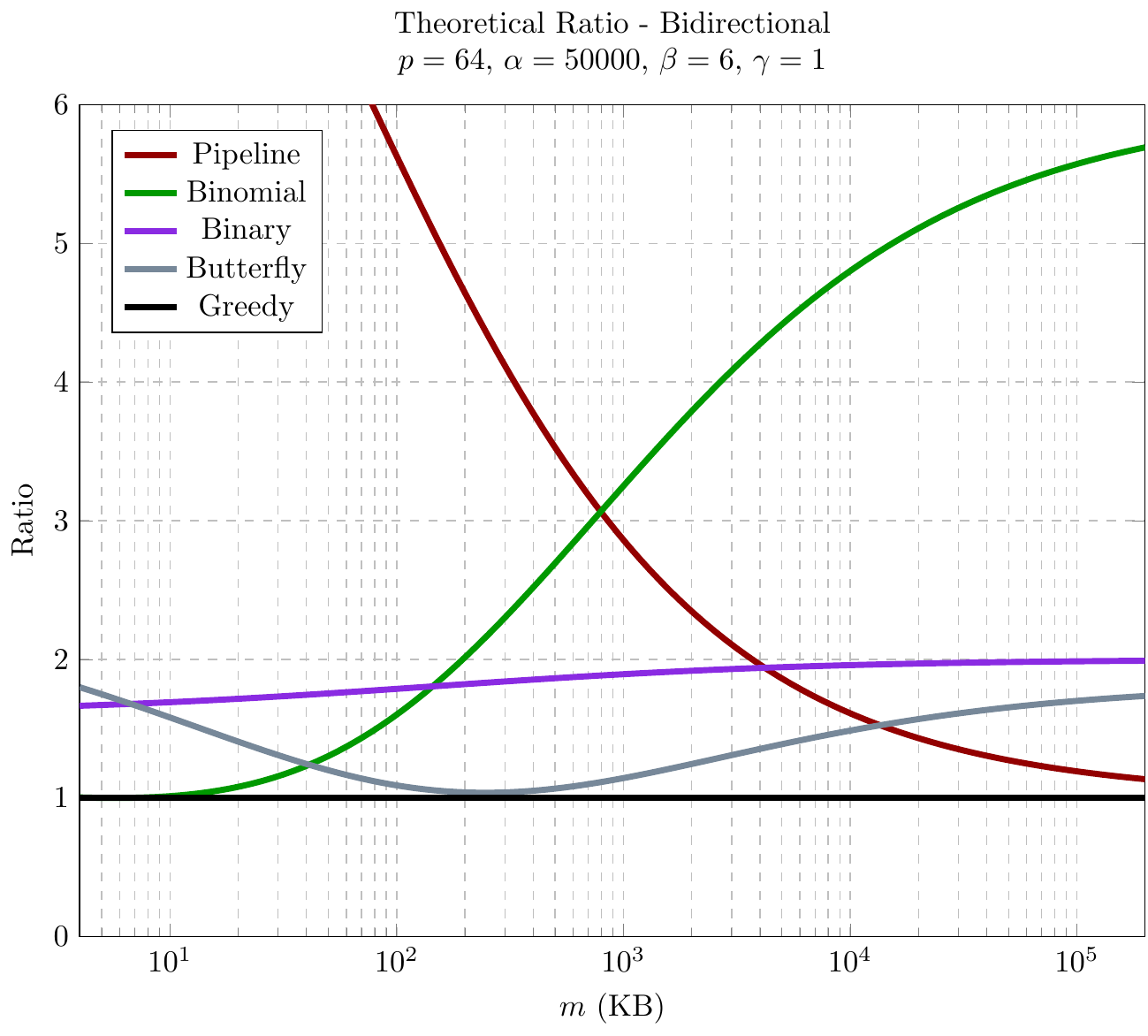}}
        }
        \hfil
        \subfloat[Message size versus time]{
            \label{fig:theo-bi-time}
			\resizebox{.45\textwidth}{!}{\includegraphics{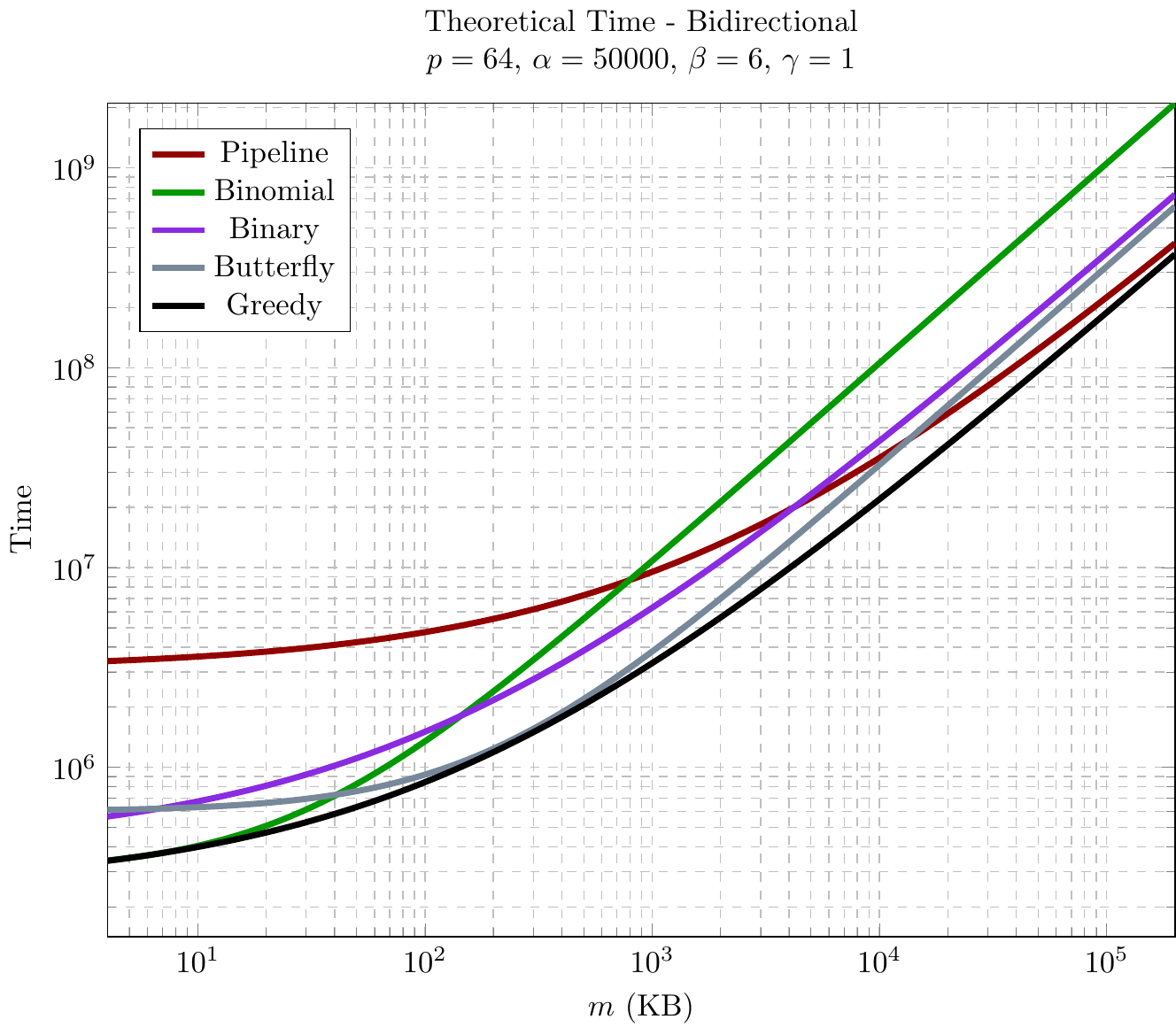}}
        }
    }
    \caption{Theoretical results for bidirectional system when $p = 64$ plotting message size ($m$) versus 
	ratio~\eqref{fig:theo-bi-ratio} and time~\eqref{fig:theo-bi-time}. 
    The formula for the reference line is $\lceil \log_2 p \rceil \alpha + m\beta + \frac{p-1}{p}m\gamma,$ which
	is the sum of the lower bounds for each term: $\alpha$, $\beta$, $\gamma$.
    The machine parameters are fixed at $\alpha = 50000$, $\beta = 6$, and $\gamma = 1$.}

	\label{fig:theoretical_p64}
\end{figure*}

For some machine parameters the butterfly algorithm can perform better than bi-greedy for some message sizes.
If the ratio, $\beta / \gamma$, is small enough then for some message sizes the butterfly algorithm
will be better than bi-greedy. 
To be more precise as to what constitutes a small ratio we consider Figure~\ref{fig.compare.raben}. 
Above the blue line bi-greedy will always be better 
than the butterfly algorithm.  Below the blue line, there exists a message size, $m$, 
such that the butterfly algorithm is better than bi-greedy 
(in Figure~\ref{fig:theoretical_p64} butterfly would dip below bi-greedy).  
For up to 1000 processors the ratio would have to be less than 5 for the butterfly algorithm to be better.  
In practice, a ratio larger than 10 is 
more likely, even for single core processors. 

\begin{figure}[htbp]
	\centerline{
	\resizebox{\columnwidth}{!}{\includegraphics{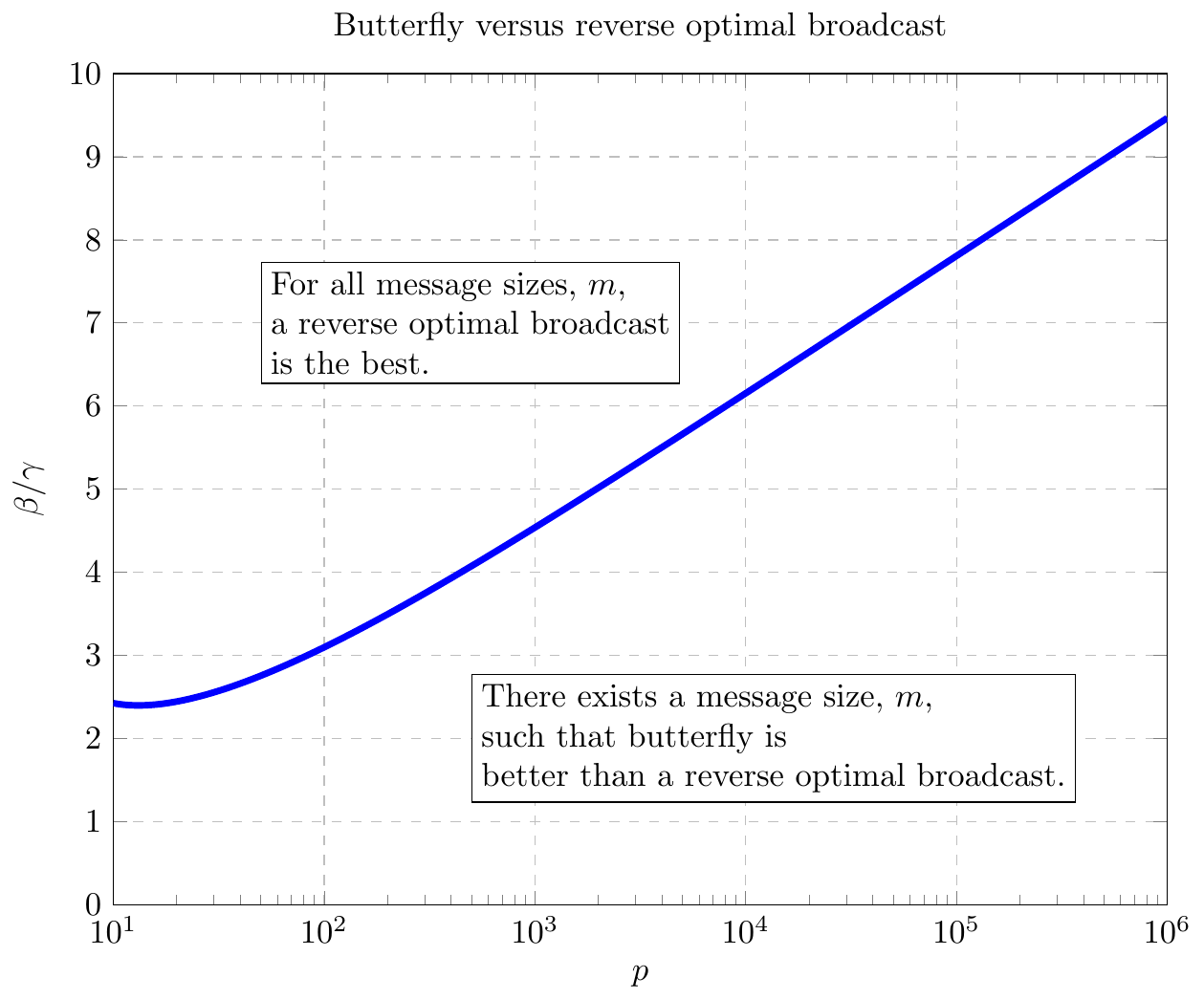}}
	}
	\caption{Parameters for which there exists a message size, $m$, such that the butterfly algorithm
		is better than a reverse optimal broadcast algorithm.}
	\label{fig.compare.raben}
\end{figure}

%
%
%
%

\subsection{The Bi-greedy Algorithm}\label{bi-greedy}

In the unidirectional case, the optimal algorithm attempted to fill all available ports 
at any time. This is the goal when adapting to the bidirectional case.  The difference
is that a processor has two ports (one sending and one receiving) rather than a single 
port that either sends or receives.  

The bidirectional greedy (bi-greedy) algorithm adheres to the following restrictions: the message is split using
equi-segmentation and the ports are filled giving priority to segments with smaller indices.  This restriction
differs from that of uni-greedy, in that a processor may receive a segment before sending a segment with a smaller
index.  For the uni-greedy algorithm, the smaller indexed segment must be sent first.
A downfall of this relaxation is that extra storage must be allocated.  However, other reduction algorithms also
require a larger buffer (see Rabenseifner~\cite{Rabenseifner:04}). 
  
\begin{algorithm*}[!p]

	\scriptsize{
		$S = \textmd{zeros}(p,1)$;  \\
		$R = \textmd{zeros}(p,1)$; \\
		$C = \textmd{zeros}(p,2)$; \\
		$t = 0$; \\
		$M = \textmd{zeros}(p,q)$;   \\
		\While{ $\min \{ M(i,j) \; | \; 1 \le i \le p \; \textmd{and} \; 1 \le j \le q \}  = 0$} {

			$\textmd{segStart} = \min \{ j \; | \; M(1,j) = 0 \}$; \\
			$\textmd{stop} = 1$; \\
			$j = \textmd{segStart} - 1$; \\
	
			\While{ $\textmd{stop}$ }{		
				$j \leftarrow j + 1$; \\
				$\textmd{COMP} = \{ i \; | \; C(i,1) \le t < C(i,2) \; \textmd{or} \; 
					t+ t_{comm}(j) > C(i)\}$; \\
				$I = \{ i \; | \;  M(i,j) = 0 \}$; \\  
				$\textmd{sendProc} = I \setminus \textmd{COMP} \setminus
					\{ i \; | \; S(i) > t \} $; \\
				$\textmd{recvProc} = I \setminus \textmd{COMP} \setminus
					\{ i \; | \; R(i) > t \} $; \\
	
				$\textmd{freeProc} = \textmd{sendProc} \bigcap \textmd{recvProc}$; \\
				$\textmd{sendProc} \leftarrow \textmd{sendProc} \setminus \textmd{freeProc}$; \\
				$\textmd{recvProc} \leftarrow \textmd{recvProc} \setminus \textmd{freeProc}$; \\
				
				$s = | \textmd{sendProc} |$; \\
				$r = | \textmd{recvProc} |$; \\
				$f = | \textmd{freeProc} |$; \\
				
				\uIf{ $s = r$ }{
					$y = \lfloor f/2 \rfloor$; \\
					$\textmd{sendProc} \leftarrow  \textmd{sendProc} \bigcup 
						\{ \textmd{freeProc}(i) \; | \; f-y+1 \le i \le f \}$; \\
					$\textmd{recvProc} \leftarrow  \textmd{recvProc} \bigcup 
						\{ \textmd{freeProc}(i) \; | \; 1 \le i \le y \}$; \\
				}
				\uElseIf{ $ s < r$}{
					$y = r - s$; \\
					$m = \min(f,y)$; \\ 
					$x = \lfloor (f - m)/2 \rfloor$; \\  
					\If{ $m >0$}{ $\textmd{sendProc} \leftarrow \textmd{sendProc} \bigcup
						\{ \textmd{freeProc}(i) \; | \; (f-(m+x) + 1 \le i \le f \}$; }
					\If{ $x >0$}{ $\textmd{recvProc} \leftarrow \textmd{recvProc} \bigcup
						\{ \textmd{freeProc}(i) \; | \; 1 \le i \le x \}$; }
				}
				\ElseIf{ $ r < s$}{
					$y = s - r$; \\
					$m = \min(f,y)$; \\ 
					$x = \lfloor (f - m)/2 \rfloor$; \\  
					\If{ $m >0$}{ $\textmd{recvProc} \leftarrow \textmd{recvProc} \bigcup
						\{ \textmd{freeProc}(i) \; | \; 1 \le i \le m+x \}$; }
					\If{ $x >0$}{ $\textmd{sendProc} \leftarrow \textmd{sendProc} \bigcup
						\{ \textmd{freeProc}(i) \; | \; f - x+1 \le i \le f \}$; }
				}
					
				$ l = \min( | \textmd{sendProc} | , | \textmd{recvProc} )$ \\
				\uIf{ $l = 0$ }{
					$\textmd{sendProc} \leftarrow \emptyset$; \\
					$\textmd{recvProc} \leftarrow \emptyset$; \\
				}
				\Else{
					$ \textmd{sendProc} = \{ \textmd{sendProc}(i) \; | \;  1 \le i \le l\}$; \\
					$ \textmd{recvProc} = \{ \textmd{recvProc}(i) \; | \; 1 \le i \le l\}$; \\
				}
				$M( i, j ) = t + t_{comm}(j), \forall \; i \; \textmd{s.t.} \; i \in \textmd{sendProc}$; \\
				$S( i ) = t + t_{comm}(j) , \forall \; i \; \textmd{s.t.} \; i \; \in \textmd{sendProc}$; \\
				$R( i ) = t + t_{comm}(j) + t_{comp}(j) , \forall \; i \; \textmd{s.t.} \; i \in \textmd{recvProc}$; \\
				$C(i,1) = t + t_{comm}(j), \forall \; i \; \textmd{s.t.} \; i \in \textmd{recvProc}$; \\
				$C(i,2) = t + t_{comm}(j) + t_{comp}(j) , \forall \; i \; \textmd{s.t.} \; i \in \textmd{recvProc}$; \\
			
				\If{ $| \{ i \; | \; M(i,j) = 0 \} | = 1 $ }{
					$ M(1,j) = max( \{ M(i,j) \; | \; 1 \le i \le p \} ) + t_{comp}(j) $; 
				}
	
				\uIf{ $ | \{ i \; | \; S(i) \le t \} | + | \{ i \; | \; R(i) \le t \} | < 2 $}{
					$\textmd{stop} = 0$;
				}
				\ElseIf{$j \ge q$}{ 
					$\textmd{stop} = 0$;
				}
			}
			$t = t + 1$; \\
		}
	}

\caption{Bi-greedy Algorithm}
\label{alg:bi-greedy}
\end{algorithm*}

Algorithm~\ref{alg:bi-greedy} on the next page gives the pseudo-code for the bi-greedy algorithm. $S$ and $R$ are
$p \times 1$ arrays where entry $i$ gives the time when 
processor $i-1$ has finished sending or receiving, respectively, the last scheduled communication.
$C$ is a $p \times 2$ array where $C(i,1)$ is the start time and $C(i,2)$ is the end time of the 
last scheduled computation on processor $i-1$.  $M$ is a $p \times q$ array where
entry $(i,j)$ gives the time when processor $i-1$ has finished sending segment $j$ (except for the root ($i=1$)
which gives the time after finishing the final computation). 
In each inner loop the maximum number of send/receive pairs for segment $j$ is determined. 
The final entries of sendProc and recvProc are
scheduled to send and receive segment $j$, respectively.  These processors are paired entry-wise to obtain
send/receive pairs.  The algorithm assumes a discrete time step.  

\begin{figure*}[!t]
	\centering
    \resizebox{.75\textwidth}{!}{\includegraphics{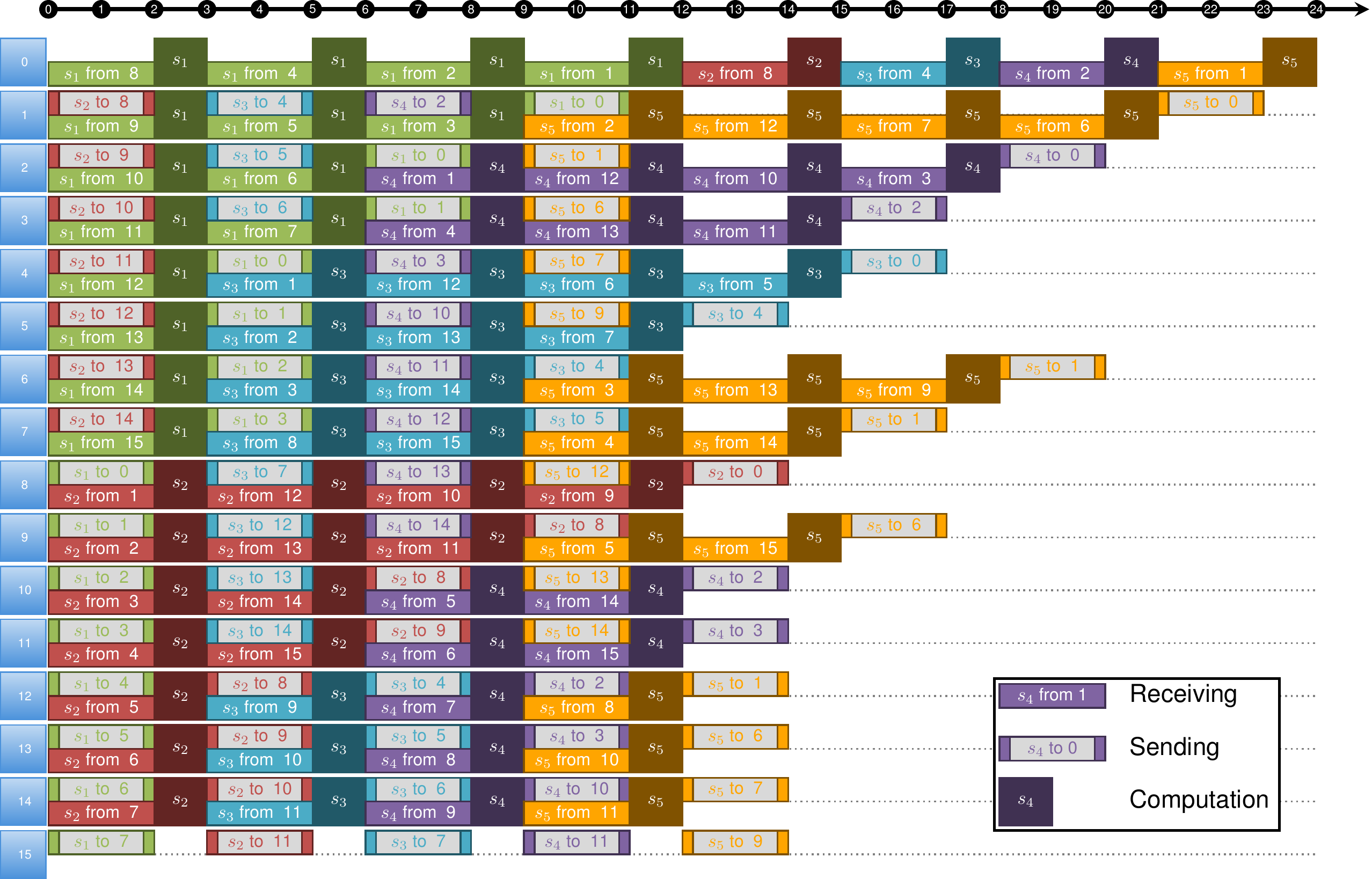}}
    \caption{\label{fig:bi-greedy-tree} Bi-greedy algorithm for 16 processors and 5
equi-segments for $t_{comm} = 2$ and $t_{comp} = 1$. Each color represents a different segment. 
Solid rectangles represent receiving processors and rectangles with end strips represent sending processors.  
The darker rectangles represent computation.}
\end{figure*}

Figure~\ref{fig:bi-greedy-tree} gives an example for 16 processors and 5 equi-segments with $t_{comm} = 2$ and
$t_{comp} = 1$. For this example we see that the time complexity matches the that of the optimal 
broadcast algorithm.  From experiments 
we conjecture that for all $p$ and $q$ the bi-greedy algorithm has the same time complexity as a reverse
optimal broadcast algorithm.
We do not provide a proof at this time.

\section{Numerical Results}\label{numer}
\begin{figure*}[htbp]
    \centerline{
        \subfloat[Message size versus ratio]{
            \label{fig:experimental_p64}
			\resizebox{.45\textwidth}{!}{\includegraphics{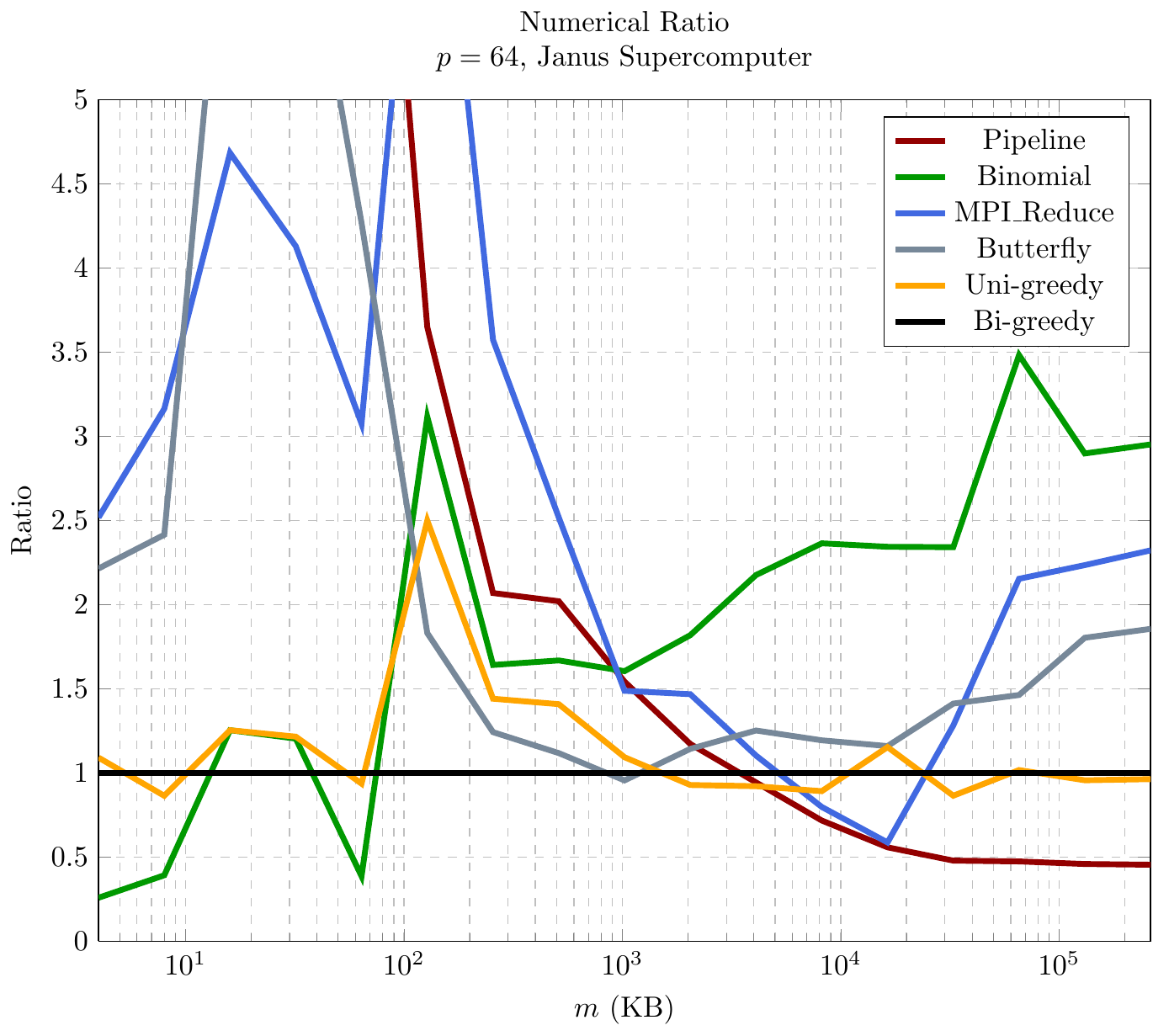}}
        }
        \hfil
        \subfloat[Message size versus time]{
            \label{fig:experimental_p64_time}
			\resizebox{.45\textwidth}{!}{\includegraphics{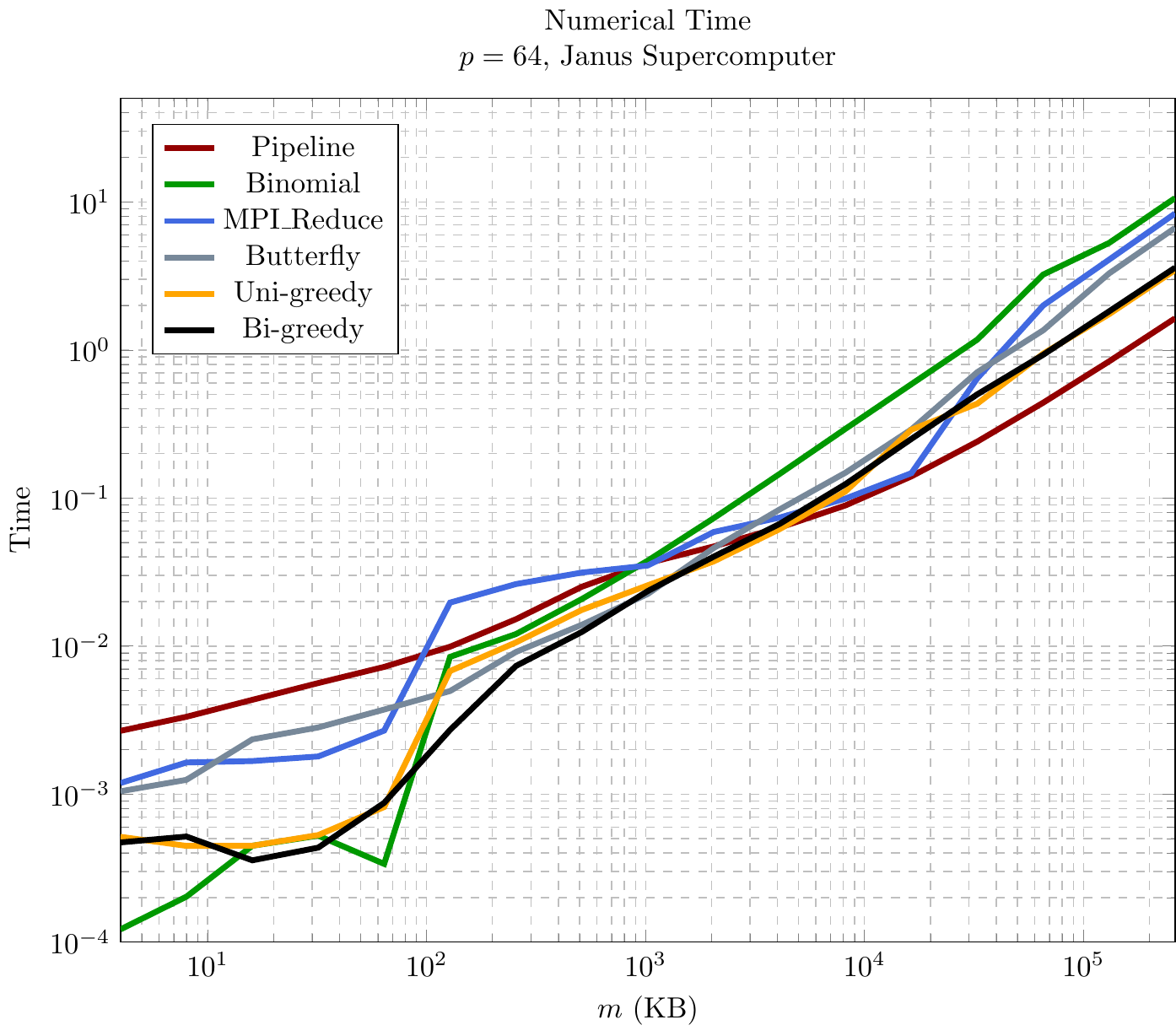}}
        }
    }
    \caption{Experimental results for each algorithm for all possible message
    sizes for 64 processors.}
    \label{fig:exp_p64}
\end{figure*}

For the numerical experiments we implemented the uni-greedy algorithm developed under the 
unidirectional contexted (Figure~\ref{fig:code_reduce_greedy}) and the bi-greedy algorithm developed under the
bidirectional context (Algorithm~\ref{alg:bi-greedy}). 
The algorithms were
implemented using OpenMPI version~1.4.3. All experiments were performed on the Janus supercomputer which consists 
of 1,368 nodes with each node having 12 cores.  Since processors on a single node have access to shared memory, 
only one processor per node was used with a total of 64 nodes in use for the experiments.  The 
nodes are connected using a fully non-blocking quad-data-rate (QDR) InfiniBand interconnect.
The time for a reduction was calculated by taking the minimum time of 10 experiments.

We wish to compare the algorithms assuming that each is optimally tuned for the best segmentation.
For a message of size $m$, segments of size $2^i,\;\textmd{s.t.}\;i = 0,\dots,\lfloor \log_2 m \rfloor$, 
where tested and the best time was taken to be the time of an optimally tuned algorithm.
Figures~\ref{fig:experimental_p64} and \ref{fig:experimental_p64_time} compares the optimally tuned greedy algorithms 
with other
algorithms for reduction. 

Comparing bi-greedy with binomial and pipeline shows similar results as to those of the
theoretical results. 
For small messages, the time for binomial and greedy are close, although binomial performs about twice as well
as greedy in a few cases.  For large messages,
pipeline begins to out perform all algorithms and is a factor of two better than bi-greedy. 
However, for medium size messages ($100 < m < 3000$ KB) bi-greedy is better than both binomial and pipeline.
The binary algorithm was not analyzed experimentally.
 
Comparing bi-greedy with the butterfly algorithm again shows similar results to the theoretical case.  
The bi-greedy algorithm outperforms butterfly for almost all message sizes. The butterfly
algorithm performs very well for medium size messages (about the same time as bi-greedy for $m=1000$ KB), but 
has poor performance for large and small messages.  
Comparing the bi-greedy algorithm with the built-in MPI function MPI\_Reduce, shows that greedy, 
in most cases, is the best. 
Finally, the uni-greedy algorithm (although designed for a unidirectional system) performed fairly well in general,
matching the performance of bi-greedy for small and large messages. 
 
For small messages, greedy is equivalent to binomial, but there is a fairly large difference in the
performance in this region.  This is because the two algorithms implemented use different send/receive pairings
and the performance differences indicate the system is actually heterogeneous.

Checking all of the possible segment sizes for a given message size is not practical.  
Auto-tuning could be used to determine the optimal segmentation for different message sizes.

\section{Unequal segmentation}\label{uneq}
So far we have only optimized the greedy algorithm by splitting a message into equally sized segments 
(except possibly the last segment).  An immediate question is then asked, what if the segments 
have unequal size?  A question we have not found to be considered in the literature.  Can the 
existing algorithms be improved? 

To investigate these questions the message size was fixed to $m =10$ and all possible 
segmentations were checked for $\alpha = 0,1,\dots,10,20,30,\dots,100,200,300,\dots,1000$, 
$\beta= 1$, $\gamma = 0,1$, and $p = 2^n \; \mbox{s.t.} \; n = 2,\dots,10$ and 
$p = 3(2^n) \; \mbox{s.t.} \; n=1,\dots,9$.  There a total of 512 possible segmentations.  
Examples of possible segmentations are $(9,1)$, $(3,3,3,1)$, $(3,2,4,1)$, $(2,2,1,5)$, 
$(6,3,1)$, etc.  An equi-segmentation is said to be one that 
has the same segment size for all segments except possibly the last segment may be smaller. 

For the greedy algorithm, 61 out of 986 experiments where optimized by an unequal segmentation.  
To compare unequal to equi-segmentation, we use the value 

\[ \textmd{ Ratio } = \dfrac{\textmd{ Best time for equi-segmentations}}{\textmd{Best time for all segmentations} }. \]

A ratio of 1 indicates that one of the optimal segmentations is an equi-segmentation.  
If there were multiple segmentations that were optimal and one of them was an equal 
segmentation, then the experiment was said to be optimized by an equi-segmentation.  
The maximum improvement over equi-segmentation was 7.3\%.  Of the 61 that 
did see and increase in performance the average improvement was 2.0\%.
 

\begin{table*}[p]
		\resizebox{\textwidth}{!}{ 
			
\begin{tabular}{c c}

\raisebox{-\height}{
\begin{tabular}{|c|c|c|c|} \hline
Parameters & Ratio & Equi-segmentation & Optimal Segmentation \\
\hline $p = 6, \alpha = 1, \gamma = 1$ & 1.0408 & (4,4,2) & (5,3,2) \\ 
& & &(3,5,2) \\ 
& & &(3,4,2,1) \\ 
& & &(4,2,2,2) \\ 
& & &(2,4,2,2) \\ 
& & &(3,2,2,2,1) \\ 
& & &(2,3,2,2,1) \\ 
\hline $p = 6, \alpha = 2, \gamma = 1$ & 1.0357 & (4,4,2) & (5,3,2) \\ 
& & &(3,5,2) \\ 
\hline $p = 8, \alpha = 0, \gamma = 1$ & 1.0571 & (1,1,1,1,1,1,1,1,1,1) & (2,1,1,1,1,1,1,1,1) \\ 
\hline $p = 8, \alpha = 1, \gamma = 1$ & 1.0200 & (4,4,2) & (5,2,2,1) \\ 
& & &(3,4,2,1) \\ 
\hline $p = 12, \alpha = 0, \gamma = 1$ & 1.0732 & (1,1,1,1,1,1,1,1,1,1) & (2,2,1,1,1,1,1,1) \\ 
\hline $p = 12, \alpha = 1, \gamma = 1$ & 1.0169 & (4,4,2) & (3,3,1,2,1) \\ 
& & &(2,2,1,2,2,1) \\ 
\hline $p = 12, \alpha = 3, \gamma = 1$ & 1.0130 & (5,5) & (4,5,1) \\ 
\hline $p = 16, \alpha = 1, \gamma = 0$ & 1.0526 & (3,3,3,1) & (5,3,2) \\ 
& & &(4,2,2,2) \\ 
\hline $p = 16, \alpha = 1, \gamma = 1$ & 1.0161 & (3,3,3,1) & (3,2,2,2,1) \\ 
\hline $p = 16, \alpha = 2, \gamma = 1$ & 1.0278 & (4,4,2) & (4,3,3) \\ 
\hline $p = 24, \alpha = 3, \gamma = 1$ & 1.0108 & (3,3,3,1) & (5,4,1) \\ 
& & &(4,3,3) \\ 
\hline $p = 24, \alpha = 4, \gamma = 1$ & 1.0388 & (5,5) & (5,4,1) \\ 
& & &(4,3,3) \\ 
\hline $p = 32, \alpha = 1, \gamma = 0$ & 1.0222 & (4,4,2) & (3,3,2,2) \\ 
\hline $p = 32, \alpha = 1, \gamma = 1$ & 1.0141 & (2,2,2,2,2) & (3,2,2,2,1) \\ 
\hline $p = 32, \alpha = 2, \gamma = 1$ & 1.0118 & (4,4,2) & (4,3,3) \\ 
& & &(5,2,2,1) \\ 
& & &(3,3,2,2) \\ 
\hline $p = 32, \alpha = 3, \gamma = 1$ & 1.0104 & (4,4,2) & (4,3,3) \\ 
\hline $p = 48, \alpha = 1, \gamma = 0$ & 1.0408 & (4,4,2) & (4,2,2,2) \\ 
& & &(3,2,3,2) \\ 
\hline $p = 48, \alpha = 1, \gamma = 1$ & 1.0260 & (2,2,2,2,2) & (3,2,2,2,1) \\ 
& & &(2,2,1,2,2,1) \\ 
\hline $p = 48, \alpha = 2, \gamma = 0$ & 1.0164 & (4,4,2) & (5,3,2) \\ 
\hline $p = 48, \alpha = 2, \gamma = 1$ & 1.0215 & (3,3,3,1) & (3,3,2,2) \\ 
\hline $p = 48, \alpha = 3, \gamma = 1$ & 1.0093 & (4,4,2) & (4,3,3) \\ 
& & &(3,4,3) \\ 
& & &(3,3,2,2) \\ 
& & &(3,2,3,2) \\ 
\hline $p = 48, \alpha = 4, \gamma = 1$ & 1.0084 & (4,4,2) & (5,4,1) \\ 
& & &(5,2,3) \\ 
& & &(4,3,3) \\ 
& & &(3,4,3) \\ 
\hline $p = 64, \alpha = 1, \gamma = 1$ & 1.0253 & (3,3,3,1) & (3,2,2,2,1) \\ 
& & &(2,2,2,2,1,1) \\ 
& & &(2,2,2,1,2,1) \\ 
& & &(2,2,2,1,1,1,1) \\ 
& & &(2,2,1,1,1,1,1,1) \\ 
\hline $p = 64, \alpha = 2, \gamma = 1$ & 1.0106 & (3,3,3,1) & (3,3,2,2) \\ 
\hline $p = 64, \alpha = 3, \gamma = 1$ & 1.0093 & (4,4,2) & (3,3,2,2) \\ 
\hline $p = 96, \alpha = 1, \gamma = 0$ & 1.0182 & (4,4,2) & (3,3,2,2) \\ 
& & &(3,2,3,2) \\ 
& & &(2,3,3,2) \\ 
\hline $p = 96, \alpha = 1, \gamma = 1$ & 1.0119 & (1,1,1,1,1,1,1,1,1,1) & (2,2,2,2,1,1) \\ 
& & &(2,2,2,1,1,1,1) \\ 
& & &(2,2,1,1,1,1,1,1) \\ 
\hline $p = 96, \alpha = 2, \gamma = 1$ & 1.0098 & (3,3,3,1) & (3,3,2,2) \\ 
\hline $p = 96, \alpha = 3, \gamma = 1$ & 1.0085 & (3,3,3,1) & (3,3,2,2) \\ 
\hline $p = 128, \alpha = 1, \gamma = 1$ & 1.0116 & (2,2,2,2,2) & (2,2,2,1,1,1,1) \\ 
& & &(2,1,1,1,1,1,1,1,1) \\ 
& & &(1,1,1,2,1,1,1,1,1) \\ 
\hline $p = 192, \alpha = 1, \gamma = 0$ & 1.0169 & (3,3,3,1) & (2,2,2,2,1,1) \\ 
\hline
\end{tabular}
}
&
\raisebox{-\height}{
\begin{tabular}{|c|c|c|c|} \hline
Parameters & Ratio & Equi-segmentation & Optimal Segmentation \\
\hline $p = 256, \alpha = 1, \gamma = 0$ & 1.0161 & (2,2,2,2,2) & (4,3,2,1) \\ 
& & &(3,2,3,2) \\ 
& & &(2,3,3,2) \\ 
& & &(3,3,2,1,1) \\ 
& & &(2,2,3,2,1) \\ 
& & &(2,3,2,1,1,1) \\ 
& & &(3,2,1,2,1,1) \\ 
& & &(3,1,2,2,1,1) \\ 
& & &(2,2,2,2,1,1) \\ 
& & &(3,2,1,1,2,1) \\ 
& & &(3,1,2,1,2,1) \\ 
& & &(2,2,2,1,2,1) \\ 
& & &(2,2,1,2,2,1) \\ 
& & &(2,1,2,2,2,1) \\ 
& & &(1,2,2,2,2,1) \\ 
& & &(2,1,2,2,1,2) \\ 
\hline $p = 256, \alpha = 1, \gamma = 1$ & 1.0109 & (1,1,1,1,1,1,1,1,1,1) & (2,1,2,1,1,1,1,1) \\ 
& & &(2,1,1,1,1,1,1,1,1) \\ 
\hline $p = 256, \alpha = 2, \gamma = 0$ & 1.0256 & (4,4,2) & (5,3,2) \\ 
& & &(4,3,3) \\ 
\hline $p = 256, \alpha = 2, \gamma = 1$ & 1.0085 & (2,2,2,2,2) & (3,2,3,2) \\ 
& & &(2,2,2,3,1) \\ 
& & &(2,1,2,2,2,1) \\ 
& & &(2,1,2,2,1,1,1) \\ 
\hline $p = 256, \alpha = 3, \gamma = 0$ & 1.0217 & (4,4,2) & (4,3,3) \\ 
\hline $p = 256, \alpha = 3, \gamma = 1$ & 1.0224 & (3,3,3,1) & (3,2,3,2) \\ 
\hline $p = 256, \alpha = 4, \gamma = 1$ & 1.0265 & (3,3,3,1) & (5,3,2) \\ 
& & &(4,3,3) \\ 
\hline $p = 256, \alpha = 5, \gamma = 1$ & 1.0303 & (4,4,2) & (5,3,2) \\ 
\hline $p = 256, \alpha = 6, \gamma = 1$ & 1.0279 & (4,4,2) & (5,3,2) \\ 
\hline $p = 384, \alpha = 3, \gamma = 1$ & 1.0142 & (3,3,3,1) & (3,2,3,2) \\ 
& & &(2,3,3,2) \\ 
\hline $p = 512, \alpha = 1, \gamma = 0$ & 1.0154 & (2,2,2,2,2) & (2,2,2,2,1,1) \\ 
& & &(2,2,2,1,2,1) \\ 
\hline $p = 512, \alpha = 2, \gamma = 0$ & 1.0238 & (4,4,2) & (4,3,3) \\ 
\hline $p = 512, \alpha = 3, \gamma = 0$ & 1.0100 & (4,4,2) & (4,3,3) \\ 
\hline $p = 512, \alpha = 3, \gamma = 1$ & 1.0069 & (3,3,3,1) & (3,2,3,2) \\ 
& & &(2,3,3,2) \\ 
\hline $p = 512, \alpha = 4, \gamma = 1$ & 1.0061 & (3,3,3,1) & (2,3,3,2) \\ 
\hline $p = 512, \alpha = 5, \gamma = 1$ & 1.0167 & (4,4,2) & (4,3,3) \\ 
\hline $p = 512, \alpha = 6, \gamma = 1$ & 1.0154 & (4,4,2) & (4,3,3) \\ 
\hline $p = 512, \alpha = 7, \gamma = 1$ & 1.0095 & (4,4,2) & (4,3,3) \\ 
\hline $p = 768, \alpha = 1, \gamma = 0$ & 1.0147 & (2,2,2,2,2) & (2,2,2,2,1,1) \\ 
& & &(2,2,2,1,2,1) \\ 
& & &(2,1,2,2,2,1) \\ 
& & &(1,2,2,2,2,1) \\ 
\hline $p = 768, \alpha = 1, \gamma = 1$ & 1.0099 & (1,1,1,1,1,1,1,1,1,1) & (2,1,1,1,1,1,1,1,1) \\ 
\hline $p = 768, \alpha = 2, \gamma = 0$ & 1.0111 & (3,3,3,1) & (4,3,3) \\ 
& & &(3,4,2,1) \\ 
& & &(4,2,3,1) \\ 
& & &(3,2,3,2) \\ 
& & &(2,3,3,2) \\ 
& & &(2,2,4,2) \\ 
& & &(2,2,3,3) \\ 
\hline $p = 768, \alpha = 3, \gamma = 0$ & 1.0189 & (4,4,2) & (4,3,3) \\ 
\hline $p = 768, \alpha = 3, \gamma = 1$ & 1.0066 & (2,2,2,2,2) & (2,3,3,2) \\ 
\hline $p = 768, \alpha = 4, \gamma = 0$ & 1.0164 & (4,4,2) & (4,3,3) \\ 
\hline $p = 768, \alpha = 4, \gamma = 1$ & 1.0174 & (3,3,3,1) & (3,2,3,2) \\ 
& & &(2,3,3,2) \\ 
\hline $p = 768, \alpha = 5, \gamma = 1$ & 1.0317 & (3,3,3,1) & (4,3,3) \\ 
\hline $p = 768, \alpha = 6, \gamma = 1$ & 1.0341 & (4,4,2) & (4,3,3) \\ 
\hline $p = 768, \alpha = 7, \gamma = 1$ & 1.0270 & (4,4,2) & (4,3,3) \\ 
\hline $p = 768, \alpha = 8, \gamma = 1$ & 1.0209 & (4,4,2) & (4,3,3) \\ 
\hline $p = 1024, \alpha = 3, \gamma = 0$ & 1.0093 & (4,4,2) & (4,3,3) \\ 

\hline
\end{tabular}
}
\end{tabular}

		}
    \caption{Optimal Segmentation compared to best of the equi-segmentations.}
    \label{Opt_seg_all}
\end{table*}

Table~\ref{Opt_seg_all} shows results for all experiments that were 
optimized by unequal segmentation. We note that, in all cases, one of the optimal 
segmentations was nearly the same as the best equi-segmentation. Actually, to 
obtain the optimal (unequal) segmentation from the best of the equi-segmentations, 
the size of only one or two segments had to be increased or decreased by a value of 
1 in most cases.

\begin{figure*}[!tb]
    \centerline{
        \subfloat[$\gamma = 0$]{
            \includegraphics[width=0.45\textwidth]{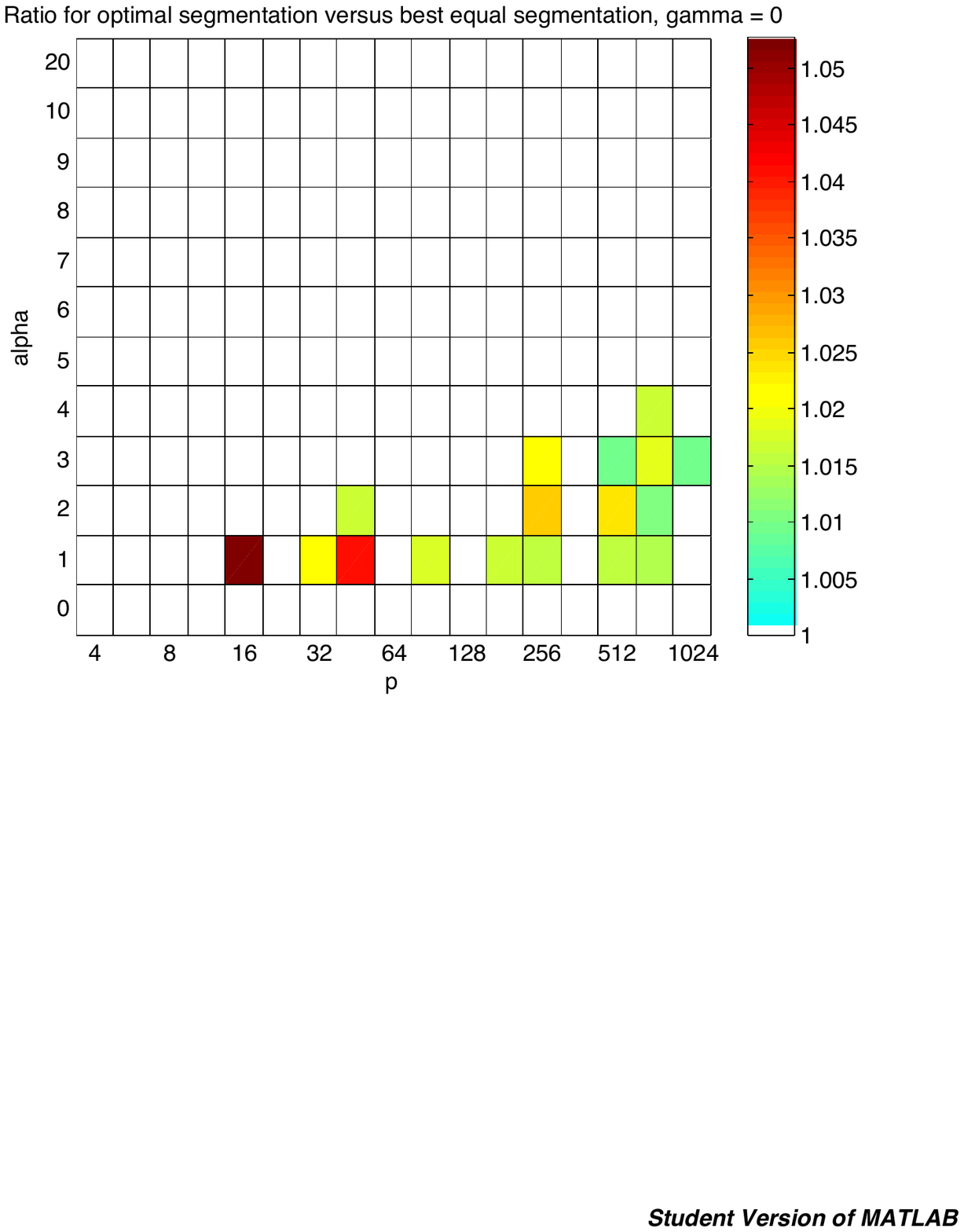}%
            \label{fig:unequal0}
        }
        \hfil
        \subfloat[$\gamma = 1$]{
            \includegraphics[width=0.45\textwidth]{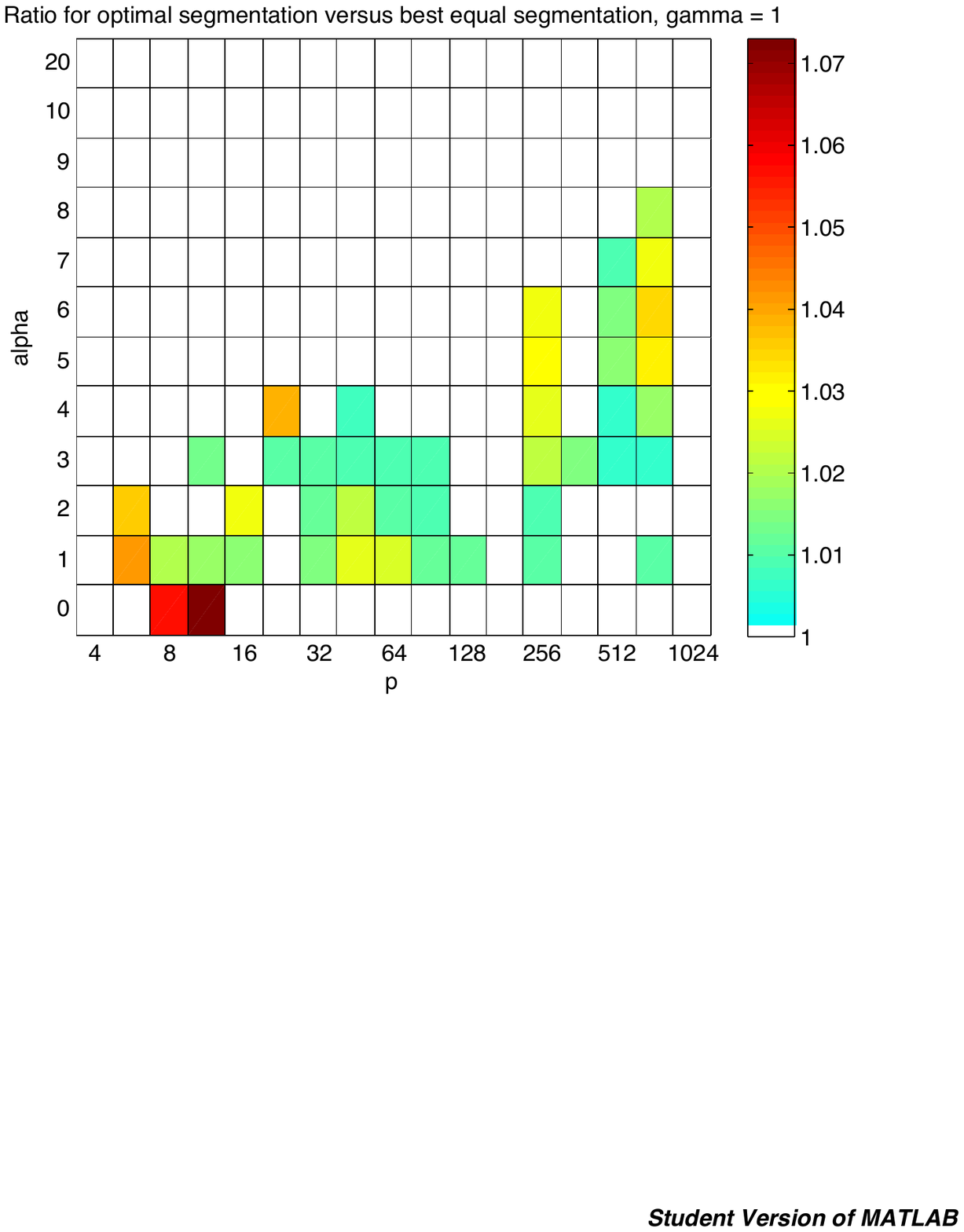}%
            \label{fig:unequal1}
        }
    }
    \caption{Ratio for optimal segmentation versus the best equal segmentation.}
    \label{fig:unequal}
\end{figure*}

Figures~\ref{fig:unequal0} and \ref{fig:unequal1} show the results for all $\alpha$, $\gamma$, and $p$.  
For $\alpha > 20$, the greedy algorithm was always optimized by an equi-segmentation and therefore is 
not graphed.   A pattern as to when unequal segmentation is optimal is not evident from the figures.  
For the pipeline algorithm, all experiments where optimized by equi-segmentation.  The binary 
algorithm was not checked for unequal segmentation.

The experiment is limited to small values of $m$ since the number of possible segmentations 
grows exponentially as $m$ increases.  The algorithms were not implemented for unequal 
segmentations since the small theoretical improvements most likely will not be obtained.  
Further investigation is required to determine if for larger messages sizes does the 
improvement of unequal segmentation become greater.

\section{Conclusion}\label{conc}
Two new algorithms for reduction are presented.  The two algorithms are developed 
under two different communication models (unidirectional and bidirectional).  

We prove that the unidirectional algorithm is optimal under certain assumptions.
Our assumptions are fully connected, homogeneous system, and processing
the segments in order.  
Previous algorithms that satisfy the same assumptions are only appropriate for some configurations. 
The uni-greedy algorithm is optimal for all configurations. 
The most improvement over standard algorithms is for messages of ``medium''
length, providing about 50\% improvement when compared to the best of
the standard algorithms in that region.  The region of ``medium'' length
messages become more prevalent as the number of processors increases.

We adapted the greedy algorithm that was developed in the unidirectional context to an algorithm
that was more suited for a bidirectional system.
With simulations we found that the bi-greedy algorithm matched the
time complexity of optimal broadcast algorithms (scheduled in reverse order as a reduction).
Similar performance improvements where found as in the unidirectional case.

Implementation of the algorithms confirm the theoretical results.  Our implementation of the
bi-greedy algorithm is was best among all algorithms implemented for ``medium'' size messages. 
For small and large messages, the more simplistic algorithms (binomial and pipeline), which are
asymptotically optimal, outperformed the greedy algorithms.  

Finally, the concept of unequal segmentation was discussed and analyzed for the greedy
algorithm and the pipeline algorithm in a unidirectional system.  For the greedy algorithm, unequal
segmentation provided, in our sample test suite, the optimal segmentation $6.2\%$ of the time with a
maximum improvement of 7.3\%. The pipeline algorithm was always optimized
by equal segmentation.

\section*{Acknowledgment}
This work utilized the Janus supercomputer, which is supported by the National
Science Foundation (award number CNS-0821794) and the University of Colorado
Boulder. The Janus supercomputer is a joint effort of the University of
Colorado Boulder, the University of Colorado Denver and the National Center for
Atmospheric Research.

\bibliographystyle{IEEEtran}
\bibliography{biblio}
\end{document}